\newcommand{\scale}{1.4}
\numberwithin{equation}{section}
\newtheorem{theorem}{Theorem}
\newtheorem{lemma}[theorem]{Lemma}
\newtheorem*{lemma*}{Lemma}
\newtheorem{proposition}[theorem]{Proposition}
\theoremstyle{remark}
\newtheorem*{remark*}{Remark}
\newtheorem{claim}[theorem]{Claim}
\renewcommand{\tilde}{\widetilde}
\renewcommand{\bar}{\overline}
\newcommand{\R}{\mathbb{R}}
\newcommand{\C}{\mathbb{C}}
\newcommand{\ds}{\displaystyle}
\newcommand{\mb}[1]{\mathbf{#1}}
\newcommand{\field}[1]{\mathbb{#1}}
\date{}
\author{Thomas E Carty\\
Department of Mathematics\\
Bradley University\\
Peoria, IL 61625, USA}
\title{Grossly Determined Solutions for a Boltzmann-like Equation}
\subjclass[2010]{35Q35, 76P99}
\begin{document}


\begin{abstract}
In gas dynamics, the connection between the continuum physics model offered by the Navier-Stokes equations and the heat equation and the molecular model offered by the kinetic theory of gases has been understood for some time, especially through the work of Chapman and Enskog, but it has never been established rigorously. This paper established a precise bridge between this two models for a simple linear Boltzman-like equation. Specifically a special class of solutions, the grossly determined solutions, of this kinetic model are shown to exist and satisfy closed form balance equations representing a class of continuum model solutions.
 \end{abstract}
 \thanks{The author acknowledges support
from National Science Foundation grant DMS 08-38434 ``EMSW21-MCTP: Research
Experience for Graduate Students" and from the Caterpillar Fellowship Grant at Bradley University.}
 \keywords{grossly determined solutions, Boltzmann equation, spatially inhomogeneous, linearized Boltzmann collisions operator}
\maketitle

\section{Introduction}

The Maxwell--Boltzmann (or Boltzmann) equation models the dynamics of a dilute gas:
\begin{equation}
\label{the maxwell boltzman eqn}
\frac{\partial F}{\partial t}+\sum_{j=1}^3v_j\frac{\partial F}{\partial x_j}=C(F,F)
\end{equation}
where $C(F,F)$ is the collisions operator. The unknown $F(t,\mb{x},\mb{v})$ is the molecular density function of the gas.  We require $F(t,\mb{x}, \mb{v}):\R \times \R^3 \times \R^3\to \R$ to be a non-negative integrable function with respect to $\mb{v}$.  Define $ n(t,\mb{x}):=\int_{V}F(t,\mb{x}, \mb{v})\,d\mb{v}$ where $V=\R^3$ represents ``velocity" space.  Then, $F(t,\mb{x}, \mb{v})/n(t, \mb{x})$ is a probability distribution with respect to $\mb{v}$.  Specifically, we interpret this distribution as the probability of seeing a molecule of velocity $\mb{v}$ (in $\field{R}^3$) at position $\mb{x}$ (the point $\mb{x}$ in $\field{R}^3$) at time $t$.

The collisions operator $C(F,F)$ is normally a bilinear integral operator which acts only on the velocity variables $\mb{v}$.  Different models of intermolecular interaction (often called the \emph{encounter problem}) yield different forms of $C(F,F)$, but there is a commonality to  all collisions operators in the full theory.  Specifically, collisions operators are required to satisfy the properties of conservation of mass, momentum and energy.

In \emph{Fundamentals of Maxwell's Kinetic Theory of a Simple Monotonic Gas} \cite{Tru_n_Mun}, C. Truesdell and R. G. Muncaster write a text designed to put the Maxwell-Boltzmann equation on both firm mathematical and historical ground.  In the epilogue of the text, the authors discuss what they term the \emph{main open problems of kinetic theory}.  Specifically, they discuss the need for a more detailed existence and uniqueness theory, the impact of the Boltzmann $H$-theorem on the ``trend to equilibrium" of a gas, and they discuss a concept of their own invention -- grossly determined solutions.  In the 35 years since their writing, a great deal has been accomplished in regards to existence theory for both the homogeneous \cite{Ark_88, Mischler_n_Wennberg_99, Des_n_Mouh_07, Mouh_07} and the inhomogeneous Boltzmann equation \cite{Kan_n_Shin_78, DiP_n_Lions_89, Alonzo_n_Gamba_09} under varying assumptions about the collisions operator.  Implications of the $H$-theorem also continue to be a great source of scholarly interest.  In \cite{Cerci_82}, Cercignani directly addresses the problem of existence as stated in \cite{Tru_n_Mun} and, in doing so, reframes Truesdell and Muncaster's question about the $H$-theorem leading to great productivity (see \cite{Vil_02, Des_n_Vil_04}).  Until now, the main problem on grossly determined solutions has not received much attention.

In contrast to the Maxwell--Boltzmann equation, the Navier--Stokes equations model the dynamics of a gas via physical fields of the gas:
\begin{align*}
\rho(\mb{v}_t+\mb{v}\cdot\nabla \mb{v}) &= \nu \nabla^2 \mb{v} +\mu \Delta \mb{v} + \rho \mb{f}\\
\rho_t+(\rho v_i)_i &= 0
\end{align*}
where $\nu$ (the bulk viscosity), $\mu$ (the shear viscosity), and $\mb{f}$ (the force) are given or defined via balance laws and constitutive equations.
Here, the unknowns are the velocity and density fields, $\mb{v}$ and $\rho$.
In \cite[Ch. XXIII]{Tru_n_Mun}, C. Truesdell and R. G. Muncaster remark that -- no matter which model of gas flow you begin with -- the ultimate goal is the same: determine the density, velocity and temperature fields of the gas.  They then note that many of the known exact solutions of Boltzmann's equation -- such as those solutions derived from Hilbert's iteration (see \cite[pg. 316]{Cerci_Dilute_Text} or \cite[Ch. XXII]{Tru_n_Mun}), or the Chapman and Enskog procedure (see \cite[pg. 86]{harris2012introduction}) -- shared the property that the solution class could be represented as being dependent on one (or more) of the gas's physical properties.  This led them to define the concept of a \textbf{\emph{grossly determined solution}}: a solution which is determined at any given instant by the gross conditions (mass density, velocity, temperature) of the gas at that time.  In their epilogue, the authors suggest that these concepts may lead to a new way forward:
\begin{enumerate}
	\item In general, can we determine a set of conservation laws that define the gross field properties?
	\item Can we use these conservation laws to determine the class of grossly determined solutions to the problem?
	\item If one could find the class of general solutions, can we show that the general solutions evolve asymptotically in time to the class grossly determined solutions?
\end{enumerate}
In addition to finding a new, richer class of solutions to the Maxwell--Boltzmann equation, the class of grossly determine solutions would now be in terms akin to the solutions of the Navier-Stokes equations.  In spirit, this type of research is already being done.  For example, relaxations and generalization of the Chapman-Enskog procedure to the Navier-Stokes equations \cite{Slemrod_99} or the Burnett equations \cite{Pareschi_n_Slemrod_02, G-C_n_Velasco_n_Uribe_08} are attempting to accomplish the same goal as grossly determined solutions.  However, to date, no one has explicitly explored Truesdell and Muncaster's conjecture.

The goal of this paper is to prove that grossly determined solutions exist for a linearized form of the Boltzmann equation, demonstrating steps (1) and (2) above.  In a forthcoming paper, step (3) will be established.  The following theorem is the main result of this paper.

\begin{theorem}
\label{GDS thm}
Consider the one-dimensional model of fluid flow \begin{eqnarray}
\label{thepide}
\frac{\partial f}{\partial t}(t,x,v)+v\frac{\partial f}{\partial x}(t,x,v)&=&-f(v,x,t)+\int_{-\infty}^{\infty} \phi(w)f(w,x,t)dw
\end{eqnarray}
where $f(t,x,v)$ is the molecular density function of the gas and $\phi$ is the probability density function $\ds \phi(v):=e^{-v^2}/\sqrt{\pi}$.  Let $\rho(t,x)$ represent the density function of the gas:
\[\rho(t,x):=\int_{\field{R}}\phi(v)f(t,x,v)\,dv\]
where the Fourier transform $\hat{\rho}(t,\xi)$ has support within $(-\sqrt{\pi},0)\cup (0,\sqrt{\pi})$. Let $\hat{\rho}_0(\xi)$ denote the Fourier transform of the density function at $t=0$.
Then a solution to equation (\ref{thepide}) is given by
\begin{equation}
f(t,x,v)=\int_{\field{R}}K_v(y)\rho(t,x-y)\,dy.
\end{equation} where the Fourier transform of $f$ is
\begin{equation}
\hat{f}(t,\xi, v)=\left(\frac{1}{1-i\xi k(\xi)+i \xi v }\right)\hat{\rho}_0(\xi)e^{-i \xi k(\xi) t}
\end{equation}
where $\ds k(\xi)=\left(\frac{-1+\xi C(\xi)}{\xi}\right)i$ and $c=C(\xi)$ is defined implicitly by $\ds \xi=\int_{\R} \frac{c\phi(v)}{c^2+v^2}\,dv$.
\end{theorem}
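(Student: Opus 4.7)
The plan is to Fourier transform equation (\ref{thepide}) in $x$, reducing it to a first-order linear ODE in $t$ parameterized by $\xi$ and $v$, and then impose a grossly determined ansatz in which $\hat{f}(t,\xi,v)$ factors as a $v$-dependent multiplier times $\hat{\rho}(t,\xi)$. Using $\rho = \int \phi(v) f\,dv$, the transformed PDE becomes
\begin{equation*}
\partial_t \hat{f}(t,\xi,v) + (1+i\xi v)\hat{f}(t,\xi,v) = \hat{\rho}(t,\xi),
\end{equation*}
subject to the self-consistency relation $\hat{\rho}(t,\xi) = \int_{\R} \phi(v)\hat{f}(t,\xi,v)\,dv$.

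Next I would posit $\hat{f}(t,\xi,v) = G(\xi,v)\hat{\rho}_0(\xi)e^{\lambda(\xi)t}$ with $\hat{\rho}(t,\xi) = \hat{\rho}_0(\xi)e^{\lambda(\xi)t}$. Substituting into the ODE forces the algebraic relation
\begin{equation*}
G(\xi,v) = \frac{1}{1+\lambda(\xi)+i\xi v}.
\end{equation*}
Setting $\xi c := 1+\lambda(\xi)$ and looking for $c$ real, the self-consistency $\int \phi(v) G(\xi,v)\,dv = 1$ becomes $\int \phi(v)/(c+iv)\,dv = \xi$; rationalizing and using that $\phi$ is even kills the imaginary part and yields
\begin{equation*}
\int_{\R}\frac{c\,\phi(v)}{c^2+v^2}\,dv = \xi,
\end{equation*}
which is precisely the implicit definition of $c=C(\xi)$. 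A short algebraic manipulation then identifies $\lambda(\xi) = \xi C(\xi) - 1 = -i\xi k(\xi)$, reproducing the formulas in the theorem.

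The principal obstacle is verifying that the implicit equation for $c$ admits a (real) solution on exactly the asserted support. I would study the map $\Phi(c) := \int_{\R} c\phi(v)/(c^2+v^2)\,dv$ for real $c$. Viewing $c/(c^2+v^2)$ as a Poisson-like kernel, one has $\Phi(c) \to \pi\phi(0) = \sqrt{\pi}$ as $c\to 0^+$ and $\Phi(c) \to 0$ as $c\to\infty$; differentiation under the integral shows $\Phi$ is strictly monotone on $(0,\infty)$, so $C$ is a well-defined real-valued inverse onto $(0,\sqrt{\pi})$. The odd symmetry $\Phi(-c) = -\Phi(c)$ produces the negative branch on $(-\sqrt{\pi},0)$, and $\xi = 0$ is excluded as the degenerate zero mode. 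This matches exactly the support hypothesis on $\hat{\rho}$, and ensures that $\lambda$ is real so that $\hat{\rho}$ is preserved in being Fourier-transform data of a real density.

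Finally I would close the argument by Fourier inversion. Since $G(\xi,v)$ is bounded and smooth on the support of $\hat{\rho}$ for each fixed $v$, it admits an inverse Fourier transform $K_v(y)$, and the convolution theorem applied to $\hat{f}(t,\xi,v) = G(\xi,v)\hat{\rho}(t,\xi)$ yields the representation $f(t,x,v) = \int K_v(y)\rho(t,x-y)\,dy$. A direct substitution then confirms that this $f$ solves (\ref{thepide}) and satisfies $\int \phi(v)f\,dv = \rho$, completing the verification.
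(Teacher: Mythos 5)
Your proposal is correct in outline and arrives at exactly the formulas in the theorem, but it is organized differently from the paper. The paper first argues (via translation invariance and H\"{o}rmander's representation lemma) that a grossly determined solution must be a convolution $f=K_v\ast\rho$, then derives the closed continuity equation $\hat{\rho}_t=-i\xi k(\xi)\hat{\rho}$ with $k(\xi)=\int_{\R}v\phi(v)\widehat{K_v}(\xi)\,dv$, obtains $\widehat{K_v}$ in terms of $k$, and finally pins down $k$ through a self-consistency condition on the \emph{first} moment, choosing $k$ pure imaginary to kill the real part. You instead transform the equation directly, posit the separable ansatz $\hat{f}=G(\xi,v)\hat{\rho}_0(\xi)e^{\lambda(\xi)t}$, and close the system with the \emph{zeroth}-moment condition $\int_{\R}\phi(v)G(\xi,v)\,dv=1$ (the paper's equation for $\int\widehat{K_v}\phi\,dv$); a short computation shows these two closure conditions are equivalent given $\lambda=-i\xi k$, so both routes land on the same implicit relation $\xi=\int_{\R}c\phi(v)/(c^2+v^2)\,dv$. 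Your version buys a cleaner derivation of the dispersion relation (the Poisson-kernel reading of the $c\to 0^+$ limit is also slicker than the paper's substitution argument), at the cost of asserting rather than deriving the exponential time dependence and the convolution structure; the paper's detour through the continuity equation is precisely what makes the result a ``balance law closed by the gross field,'' which is the conceptual point of grossly determined solutions.

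The one genuine gap is your treatment of monotonicity. You write that ``differentiation under the integral shows $\Phi$ is strictly monotone on $(0,\infty)$,'' but the derivative is $\Phi'(c)=\int_{\R}\frac{(v^2-c^2)\phi(v)}{(v^2+c^2)^2}\,dv$, whose integrand changes sign at $|v|=c$, so its negativity is not manifest. The paper devotes a separate claim to this, splitting the integral at the crossover point and bounding each piece by the value of $\phi$ there. A correct short argument in your framework: substitute $v=cu$ to get $\Phi'(c)=\frac{1}{c}\int_{\R}\frac{u^2-1}{(u^2+1)^2}\phi(cu)\,du$, note that $\int_{\R}\frac{u^2-1}{(u^2+1)^2}\,du=0$, and observe that replacing $\phi(cu)$ by the constant $\phi(c)$ increases the integrand pointwise (since $\phi(cu)>\phi(c)$ exactly where the rational factor is negative), forcing $\Phi'(c)<0$. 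Without some such argument the invertibility of $\Xi$, and hence the existence of $C(\xi)$, is unproven.
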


Section 2 of this paper gives an extremely brief introduction of the Maxwell-Boltzmann equation and the role of balance laws in the kinetic theory.  In Section 3, we will justify why the partial integro-differential equation (\ref{thepide}) is an appropriate proxy for the full one-dimensional Boltzmann equation.  Section 4 derives the class of grossly determined solutions stated in Theorem \ref{GDS thm}.

\section{Background}

\subsection{The Collisions Operator and the Summational Invariants}

The collisions operator is normally a homogeneous operator of degree 2.  (i.e. $L[\alpha u]=\alpha^2L[u]$.)  For an intuition of the structure of $C(F,F)$, consider two particles $P$ and $Q$ and let $\mb{v}$ and $\mb{v}'$ and $\mb{v}_*$ and $\mb{v}_*'$ be the pre- and post- collision velocities of the particles $P$ and $Q$, respectively.  Let $F(t,\mb{x},\mb{v})$ be the molecular density function for the gas.  For notational convenience, let $F(\mb{v}')=F(t,\mb{x},\mb{v}')$, $F(\mb{v}_*')=F(t,\mb{x},\mb{v}_*')$, etc.

We have introduced new unknowns $\mb{v}'$ and $\mb{v}_*'$ into our problem.  These can be derived from the Encounter Problem \cite[Ch. VI]{Tru_n_Mun}, the modeling of the interaction of two particles in otherwise empty space.
\footnote{The interaction of two particles need not be dependent on the pre- and post- velocities alone.  For example, in a finer model, molecules may be  assumed to be non-spheres and the interaction between two molecules will now depend upon spatial orientation in addition to position.  See \cite[Ch.\ VI]{Tru_n_Mun}.}
In this framework, under appropriate assumptions, the encounter problem is akin to solving a two-body problem.  Thus, we can interpret $\mb{v}'$ and $\mb{v}_*'$ as $\mb{v}'=V'(\mb{v},\mb{v}_*,s_1,s_2)$ and  $\mb{v}_*'=V_*'(\mb{v},\mb{v}_*,s_1,s_2)$ where $S=\R^2$ is a parameter space representing the spatial trajectories of the molecules $P$ and $Q$.

The net increase in the density of molecules of velocity $\mb{v}$ by collisions is modeled as being proportional to the difference $F(\mb{v}')F(\mb{v}_*')-F(\mb{v})F(\mb{v}_*)$.  To ensure that this difference is itself a molecular density function, we modify by an appropriate weight function $w$.  This results in the collisions operator \begin{eqnarray}
\label{collisionsoperator}
C(F,F)(\mb{v})&=&\int_{V_*}\int_{S} w(F(\mb{v}')F(\mb{v}_*')-F(\mb{v})F(\mb{v}_*))\,dSd\mb{v}_*.
\end{eqnarray}

While the derivation of the collisions operator and its properties are rife with motivational and simplifying assumptions, we will take the viewpoint that the following conservation properties are axiomatic.
\begin{proposition}{Properties of the Collisions Operator}
\begin{enumerate}
    \item \textit{(conservation of mass condition)}\\ $\ds \int_{V} C(F,F)d\mb{v}=0$
    \item  \textit{(conservation of momentum condition)}\\
    $\ds \int_{V} v_iC(F,F)d\mb{v}=0$ where $v_i$ is any component of the molecular velocity
    \item \textit{(conservation of energy condition)}\\
    $\ds \int_{V} |\mb{v}|^2C(F,F)d\mb{v}=0$ where $|\mb{v}|^2=v_1^2+v_2^2+v_3^2$ is kinetic energy (modulo a constant)
\end{enumerate}
\end{proposition}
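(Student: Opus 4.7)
The plan is to prove all three conservation identities simultaneously by proving the stronger statement that $\int_{V}\psi(\mathbf{v})\,C(F,F)\,d\mathbf{v}=0$ for any \emph{summational invariant} $\psi$, i.e.\ any function satisfying
\[
\psi(\mathbf{v})+\psi(\mathbf{v}_*)=\psi(\mathbf{v}')+\psi(\mathbf{v}_*')
\]
for every quadruple of pre- and post-collision velocities produced by the encounter problem. Once this is established, parts (1)--(3) follow by taking $\psi=1$ (trivially invariant), $\psi(\mathbf{v})=v_i$ (invariant because momentum is conserved in the two-body encounter), and $\psi(\mathbf{v})=|\mathbf{v}|^2$ (invariant because kinetic energy is conserved in an elastic encounter).

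First I would substitute the integral representation (\ref{collisionsoperator}) and write
\[
I(\psi):=\int_{V}\psi(\mathbf{v})\,C(F,F)(\mathbf{v})\,d\mathbf{v}
=\int_{V}\!\!\int_{V_*}\!\!\int_{S}\psi(\mathbf{v})\,w\bigl(F(\mathbf{v}')F(\mathbf{v}_*')-F(\mathbf{v})F(\mathbf{v}_*)\bigr)\,dS\,d\mathbf{v}_*\,d\mathbf{v}.
\]
Then I would apply two successive changes of variables in the iterated integral. The first is the exchange $\mathbf{v}\leftrightarrow\mathbf{v}_*$, under which the roles of the two particles $P$ and $Q$ in the encounter are interchanged; since the weight $w$ and the product $F(\mathbf{v})F(\mathbf{v}_*)$ are symmetric in this exchange, averaging the original expression with the exchanged one yields
\[
I(\psi)=\tfrac{1}{2}\int\!\!\int\!\!\int \bigl(\psi(\mathbf{v})+\psi(\mathbf{v}_*)\bigr) w\bigl(F(\mathbf{v}')F(\mathbf{v}_*')-F(\mathbf{v})F(\mathbf{v}_*)\bigr)\,dS\,d\mathbf{v}_*\,d\mathbf{v}.
\]
The second is the involutive change $(\mathbf{v},\mathbf{v}_*)\leftrightarrow(\mathbf{v}',\mathbf{v}_*')$ arising from the collision map $V',V_*'$; by the standard micro-reversibility property of the encounter problem this map has unit Jacobian and maps the measure $w\,dS\,d\mathbf{v}_*\,d\mathbf{v}$ to itself, while the bracket $F'F_*'-FF_*$ changes sign. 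Averaging again produces
\[
I(\psi)=\tfrac{1}{4}\int\!\!\int\!\!\int \bigl(\psi(\mathbf{v})+\psi(\mathbf{v}_*)-\psi(\mathbf{v}')-\psi(\mathbf{v}_*')\bigr) w\bigl(F'F_*'-FF_*\bigr)\,dS\,d\mathbf{v}_*\,d\mathbf{v},
\]
which vanishes identically when $\psi$ is a summational invariant. Applied to $\psi=1,v_i,|\mathbf{v}|^2$, this yields (1), (2), (3).

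The main obstacle is really outside the manipulation itself: it lies in justifying the two changes of variables, and in particular the claim that the collision map built from $V'$ and $V_*'$ is a measure-preserving involution on $V\times V_*\times S$. This requires invoking micro-reversibility of the encounter problem (conservation of momentum and energy together with time-reversal symmetry of the two-body motion), a fact that is derived in detail in \cite[Ch.\ VI]{Tru_n_Mun} from the underlying mechanics; here it plays the role of a physical axiom. A secondary technical point is that Fubini must apply to justify interchanging the order of integration, so one must assume that $F$ belongs to a class (for instance, with sufficient decay in $\mathbf{v}$) that makes the iterated integrals of $\psi\,w\,F F_*$ and $\psi\,w\,F' F_*'$ absolutely convergent. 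Under such regularity hypotheses, the symmetrization argument above makes the three conservation identities rigorous.
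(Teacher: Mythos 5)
Your symmetrization argument is the standard and correct one, but it is worth noting that the paper does not prove this proposition at all: the text immediately preceding it states that these conservation properties will be taken as \emph{axiomatic}, and the sentence after the proposition merely gestures at the fact that they ``are derived from using $C(F,F)$ and the assumption that the total mass, momentum and energy before a collision are equal to those same quantities after a collision.'' What you have written is precisely the classical justification of that axiom (the one found in \cite[Ch.\ VI]{Tru_n_Mun} and in Cercignani): exchange the particle labels $\mathbf{v}\leftrightarrow\mathbf{v}_*$, then apply the inverse-collision involution $(\mathbf{v},\mathbf{v}_*)\leftrightarrow(\mathbf{v}',\mathbf{v}_*')$, average, and observe that the resulting factor $\psi+\psi_*-\psi'-\psi_*'$ vanishes for each summational invariant $\psi\in\{1,v_i,|\mathbf{v}|^2\}$. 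You are also right to flag where the real content lies: micro-reversibility (unit Jacobian, invariance of $w\,dS\,d\mathbf{v}_*\,d\mathbf{v}$) and the integrability needed for Fubini are exactly the ``motivational and simplifying assumptions'' the paper chooses not to engage with, which is why it adopts the axiomatic viewpoint rather than reproduce your argument. So your proof buys a derivation from the mechanics of the encounter problem at the cost of importing those hypotheses, whereas the paper buys brevity by postulating the conclusions; neither is in conflict with the other, and your write-up could serve as the omitted justification provided the regularity assumptions on $F$ and $w$ are stated explicitly.
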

The quantities 1, $v_i$ and $|\mb{v}|^2$ are called the \emph{summational invariants}.  The summational invariant conditions are derived from using $C(F,F)$ and the assumption that the total mass, momentum and energy before a collision are equal to those same quantities after a collision.

Equipped with the above conservation properties, the collisions operator has another additional characteristic.
\begin{proposition} $C(F,F)=0$ if and only if $F$ is a \emph{Maxwellian (normal) distribution}.
\end{proposition}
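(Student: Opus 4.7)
The plan is to prove the two implications separately: the ``if'' direction is a direct algebraic verification using the encounter's conservation laws, while the ``only if'' direction requires the classical $H$-theorem symmetrization followed by the characterization of summational invariants.

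For the sufficiency direction, I would write a general Maxwellian as $F(\mb{v}) = A\, e^{-\beta|\mb{v}-\mb{u}|^2}$ with constants $A,\beta>0$ and $\mb{u}\in\R^3$. Expanding gives $|\mb{v}-\mb{u}|^2 = |\mb{v}|^2 - 2\mb{u}\cdot\mb{v}+|\mb{u}|^2$, so
\[
F(\mb{v})F(\mb{v}_*) = A^2\exp\!\left(-\beta\left[\,|\mb{v}|^2+|\mb{v}_*|^2 - 2\mb{u}\cdot(\mb{v}+\mb{v}_*) + 2|\mb{u}|^2\,\right]\right).
\]
Since the encounter preserves both momentum $\mb{v}+\mb{v}_*$ and kinetic energy $|\mb{v}|^2+|\mb{v}_*|^2$ (these being exactly the quantities enforced by the summational invariant conditions already stated for $C$), the bracketed expression is symmetric under $(\mb{v},\mb{v}_*)\leftrightarrow(\mb{v}',\mb{v}_*')$. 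Hence $F(\mb{v})F(\mb{v}_*)=F(\mb{v}')F(\mb{v}_*')$ pointwise in the $(\mb{v},\mb{v}_*,s_1,s_2)$ variables, the integrand of (\ref{collisionsoperator}) vanishes identically, and $C(F,F)\equiv 0$.

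For the necessity direction my approach would be the standard $H$-theorem computation. Multiply $C(F,F)$ by $\log F(\mb{v})$ and integrate in $\mb{v}$; then symmetrize using the two invariances of $w\,dS$, namely the swap $\mb{v}\leftrightarrow \mb{v}_*$ and the pre/post exchange $(\mb{v},\mb{v}_*)\leftrightarrow(\mb{v}',\mb{v}_*')$. Averaging the four equivalent forms yields
\[
\int_V (\log F)\,C(F,F)\,d\mb{v} \;=\; -\tfrac{1}{4}\int_V\!\!\int_{V_*}\!\!\int_S w\,\bigl[F(\mb{v}')F(\mb{v}_*') - F(\mb{v})F(\mb{v}_*)\bigr]\,\log\!\frac{F(\mb{v}')F(\mb{v}_*')}{F(\mb{v})F(\mb{v}_*)}\,dS\,d\mb{v}_*\,d\mb{v}.
\]
Since $(a-b)\log(a/b)\geq 0$ with equality only when $a=b$, the assumption $C(F,F)=0$ forces $F(\mb{v})F(\mb{v}_*)=F(\mb{v}')F(\mb{v}_*')$ almost everywhere on the collision manifold. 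Taking logarithms, this says $\psi := \log F$ satisfies the additivity law $\psi(\mb{v})+\psi(\mb{v}_*)=\psi(\mb{v}')+\psi(\mb{v}_*')$ across every admissible encounter; that is, $\log F$ is a summational invariant.

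The final step, and the main obstacle, is the classical characterization theorem: every sufficiently regular $\psi$ satisfying the additive law above on all momentum- and energy-preserving pairs must lie in the linear span of $\{1,\,v_1,\,v_2,\,v_3,\,|\mb{v}|^2\}$. Granted this, $\log F(\mb{v}) = a + \mb{b}\cdot\mb{v} + c|\mb{v}|^2$ with $c<0$ for integrability, and completing the square recasts $F$ as $A\exp(-\beta|\mb{v}-\mb{u}|^2)$, a Maxwellian. Proving the characterization itself is the delicate part: one exploits the freedom in the encounter parameters $(s_1,s_2)$ together with rotational invariance to reduce the additive law to a Cauchy-type functional equation in the collision variables, then rules out pathological solutions by invoking a modest regularity assumption on $F$. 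I would carry this step out following the exposition in \cite{Tru_n_Mun}.
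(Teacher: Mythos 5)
The paper offers no proof of this proposition at all: it is stated as classical background, with the conservation properties of $C$ taken ``as axiomatic'' and the Maxwellian characterization implicitly deferred to Truesdell--Muncaster. So there is nothing in the paper to compare against line by line; what you have written is the standard textbook argument, and as an outline it is essentially correct. The sufficiency direction is fine, with one caveat: the pointwise identities $\mb{v}+\mb{v}_*=\mb{v}'+\mb{v}_*'$ and $|\mb{v}|^2+|\mb{v}_*|^2=|\mb{v}'|^2+|\mb{v}_*'|^2$ are properties of the two-body encounter map, not consequences of the \emph{integrated} summational-invariant conditions $\int_V \psi\, C(F,F)\,d\mb{v}=0$ stated in the paper; the integral conditions are derived \emph{from} the pointwise ones, not the other way around, so you should cite the encounter problem (elastic collision mechanics) rather than the proposition about $C$. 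The necessity direction via the $H$-theorem symmetrization is the right route, but note two standing hypotheses you are silently using: $F>0$ (so that $\log F$ is defined) and $w>0$ almost everywhere on the collision manifold (otherwise $(a-b)\log(a/b)\ge 0$ only forces $FF_*=F'F_*'$ where $w$ is positive). Finally, you correctly identify that the whole weight of the ``only if'' direction rests on the characterization of summational invariants as the span of $\{1,v_1,v_2,v_3,|\mb{v}|^2\}$, and you defer that to the literature; that is a legitimate choice for a background proposition, but it means your argument is a reduction to a known theorem rather than a self-contained proof -- which, to be fair, is exactly the status the paper itself assigns to this statement.
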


\subsection{Balance Equations / Conservation Laws derived from the Boltzmann Equation}

In the classical theory, the summational invariants of the collisions operator are used to derive the balance equations associated with continuum fluid dynamics.  Here, the Boltzmann equation is converted into a system of PDEs that are dependent upon the gross field properties of the gas.

Recall that $F(t,\mb{x}, \mb{v})$ is a non-normalized, probability distribution with respect to $\mb{v}$.  From this, we establish the gross (physical) properties of density, momentum (velocity) and energy.  Let $m$ be the molecular mass.  Then
\begin{enumerate}
    \item the \emph{density function} (0th moment): $\ds \rho(t,\mb{x}) = \int_V m F(t,\mb{x},\mb{v})d\mb{v}$
    \item the \emph{$i$th component of the momentum} (1st moment): $\ds \bar{v}_i(t,\mb{x})\rho(t,\mb{x}) = \int_V m v_i F(t,\mb{x},\mb{v})d\mb{v}$
    \item the \emph{energy function} (contracted 2nd moment): $ \ds e(t,\mb{x})\rho(t,\mb{x}) = \int_V\frac{m|\mb{v}|^2}{2}F(t,\mb{x},\mb{v})d\mb{v}$
\end{enumerate}

Now, beginning with the Boltzmann Equation
\[\frac{\partial F}{\partial t}+\sum_{j=1}^3v_j\frac{\partial F}{\partial x_j}=C(F,F)\]
we use the moments to derive the field equations.

\begin{proposition} The Balance Equations
\label{balancelaws}
\begin{enumerate}
    \item (the continuity equation)\\
    $\ds \frac{\partial \rho}{\partial t}+\sum_{j=1}^3 \frac{\partial}{\partial x_j}\left(\rho(t,\mb{x})\bar{v_i}(t,\mb{x})\right)=0$
    \item $\ds \frac{\partial }{\partial t}(\rho(t,\mb{x})\bar{v}_i(t,\mb{x}))+\sum_{j=1}^3 \frac{\partial}{\partial x_j}\left(P_{ij}(t,\mb{x})\right)=0$  where  $\ds P_{ij}(t,\mb{x})=\int_V mv_iv_jFd\mb{v}$
    \item $\ds \frac{\partial}{\partial t}\left(e(t,\mb{x})\rho(t,\mb{x}\right))+\sum_{j=1}^3 \frac{\partial}{\partial x_j} \left(T_j\right)=0$ where $\ds T_j(\mb{x},t)=\int_V \frac{m|\mb{v}|^2}{2}v_jF d\mb{v}$
\end{enumerate}
\end{proposition}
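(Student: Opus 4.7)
The plan is to derive each of the three balance equations by multiplying the Boltzmann equation (\ref{the maxwell boltzman eqn}) by an appropriate summational invariant and integrating over the velocity space $V=\R^3$. In each case the right-hand side is annihilated by the corresponding conservation property of the collision operator stated in the previous proposition, while the left-hand side reorganizes into the claimed divergence form after exchanging differentiation and velocity integration.

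For the continuity equation I multiply by $m$ and integrate in $\mb{v}$. Moving $\partial/\partial t$ outside the velocity integral gives $\partial \rho/\partial t$, and since $v_j$ is independent of both $t$ and $\mb{x}$, moving $\partial/\partial x_j$ outside the integral yields $\sum_j \partial_{x_j} \int_V m v_j F\,d\mb{v} = \sum_j \partial_{x_j}(\rho\bar{v}_j)$. The collision integral vanishes by the conservation of mass property. For the momentum equation the procedure is identical with weight $m v_i$: the time term becomes $\partial_t(\rho\bar{v}_i)$, the transport term becomes $\sum_j \partial_{x_j} \int_V m v_i v_j F\,d\mb{v}=\sum_j \partial_{x_j} P_{ij}$, and the collision contribution vanishes by conservation of momentum. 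Finally, for the energy equation I multiply by $\tfrac{m|\mb{v}|^2}{2}$ and integrate, producing $\partial_t(e\rho)+\sum_j \partial_{x_j} T_j = 0$, with the right-hand side vanishing by conservation of energy. In each calculation the key algebraic observation is that $v_j \partial_{x_j} F = \partial_{x_j}(v_j F)$, which is what permits the convective term to be recast as the divergence of the next-higher moment.

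The main obstacle is purely technical: justifying the interchange of $\partial/\partial t$ and $\partial/\partial x_j$ with the velocity integral, and integrating $v_j\,\partial_{x_j} F$ against the weights $m$, $m v_i$, and $\tfrac{m|\mb{v}|^2}{2}$. This requires $F$ and its first derivatives to decay rapidly enough in $\mb{v}$ for the dominated convergence theorem to apply against polynomials of degree up to three in $|\mb{v}|$. Under the standing hypothesis that $F$ is rapidly decreasing in velocity (e.g.\ bounded by a Maxwellian, which is the natural class for kinetic theory), these interchanges are immediate, and the three balance laws follow directly from the computations above combined with the three conservation properties of $C(F,F)$.
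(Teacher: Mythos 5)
Your proposal is correct and follows exactly the paper's method: multiply the Boltzmann equation by each summational invariant, integrate over velocity space, pull the $t$- and $x_j$-derivatives outside the integral, and invoke the corresponding conservation property of $C(F,F)$ to kill the right-hand side. The paper in fact only carries this out for the continuity equation (omitting the momentum and energy cases as unneeded), so your treatment of all three cases, together with the remark on justifying the interchange of differentiation and integration under a decay hypothesis on $F$, is if anything more complete than the printed proof.
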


\begin{proof}
We include the proof of the continuity equation to motivate some of the computations in the following chapter.  The others are unimportant to this paper and are omitted.

To derive the continuity equation, multiply the Boltzmann Equation by the constant $m$.
Integrate over the velocity space $V$:
\[\int_V m\frac{\partial F}{\partial t}d\mb{v}+\int_V \sum_{j=1}^3m v_j\frac{\partial F}{\partial x_j}d\mb{v}= \int_V mC(F,F)d\mb{v}\]
\[\frac{\partial }{\partial t}\left(\int_V mFd\mb{v}\right)+\sum_{j=1}^3 \frac{\partial}{\partial x_j}\left(\int_V m v_jFd\mb{v}\right)= m\int_V (1)C(F,F)d\mb{v}\]
By derivation of the density function above and properties of the collision condition, we obtain
\[ \frac{\partial \rho}{\partial t}+\sum_{j=1}^3 \frac{\partial}{\partial x_j}\left(\rho(t,\mb{x})\bar{v}_j(t,\mb{x})\right)=0.\]

\end{proof}

The balance equations have introduced new unknown functions.  The term $\mb{P}=\left[P_{ij}\right]$ in balance equation (2) is called the \emph{stress tensor}. In traditional kinetic theory of gas texts (versus elasticity), this term is called the \emph{pressure tensor}.  (The pressure tensor is the negative of the stress tensor.)  Similarly, one can interpret the function $\mb{T}=(T_1,T_2,T_3)$ as an energy flux vector.  In the classical theory, assumptions are now made about the gas with the goal of representing these tensors back in terms of density, momentum and energy (i.e. constitutive relations).  In other words, the system of PDEs that comprise the balance laws are now a closed system in terms of the density, momentum and energy functions.  The ultimate goal of this exercise is that we now hope that this new system of PDEs in the gross fields alone are solvable via classical PDE methods.

\section{Derivation of a 1D Approximation of the Boltzmann Equation}

\subsection{Approximating the Collisions Operator}

We begin by simplifying the Maxwell--Boltzmann equation via imposing the condition that the state spaces be one-dimensional.  That is $F(t,x,v): \R \times \R \times \R \to \R$ and the Boltzmann equation becomes \[\frac{\partial F}{\partial t}(t,x,v)+v\frac{\partial F}{\partial x}(t,x,v)=C(F,F).\]  We seek to replace $C$ with a term $\tilde{C}$ that simplifies the equation, but still retains some of the basic characteristics of the full collisions operator.

In Truesdell and Muncaster's text \cite[Ch.\ VII]{Tru_n_Mun}, alternative forms of the collisions operator are explored.  We first note that the collisions operator can be written more generally as a symmetric bilinear operator:
\begin{align*}
    C(G,H)(v)&:=&\frac{1}{2}\int_{V_*}\int_{S} w[G(\mb{v}')H(\mb{v}_*')+G(\mb{v}_*')H(\mb{v}')-G(\mb{v})H(\mb{v}_*)-G(\mb{v}_*)H(\mb{v})]dSd\mb{v}_*\\
\end{align*}
Or, more simply denoted,
\begin{equation}
    \label{generalizedcollisionsoperator}
    C(G,H)(v) = \frac{1}{2}\int_{V_*}\int_{S} w(G'H_*'+G_*'H'-GH_*-G_*H)dSd\mb{v}_*
\end{equation}  where $G$ and $H$ are any functions such that the integral is finite.
Note that if we let $F=G=H$, then the above simplifies to equation \eqref{collisionsoperator}, the original collisions operator.

Akin to the traditional linearization technique (see \cite{Dolera}), we perturb a solution $F$ about a Maxwellian density function.  Let $\phi(v)$ be a uniform Maxwellian (normal) distribution.  Note our choice of $\phi$ is independent of $t$ and $x$.  Define the function
\begin{equation*}
F_{\epsilon}(t,x,v):=\phi(v)(1+\epsilon f(t,x,v)), \epsilon>0.
\end{equation*}
The function $F_{\epsilon}$ can be interpreted as a slight deviation from the equilibrium solution $\phi(v)$.    Requiring $F_{\epsilon}$ to be a solution to the Boltzmann equation, consider the action of $C$ on $F_{\epsilon}$:
\begin{align*}
    C(F_{\epsilon},F_{\epsilon}) &= C(\phi+\epsilon \phi f, \phi+\epsilon \phi f)\\
    &= C(\phi,\phi)+\epsilon C(\phi, \phi f)+\epsilon C( \phi f, \phi)+\epsilon^2C( \phi f, \phi f) \tag{by the bilinearity of $C$.}
\end{align*}
Since $C(\phi,\phi)=0$ (because $\phi$ is Maxwellian) and $C( \phi f, \phi) = C( \phi, \phi f)$ (by symmetry of $C$),
\[C(F_{\epsilon},F_{\epsilon})= 2\epsilon C(\phi,\phi f)+O(\epsilon^2).\]
Substituting $F_{\epsilon}$ into the rest of the one-dimensional Maxwell-Boltzmann equation leads one to consider the Boltzmann equation at first order
\[\phi(v)f_t+v\phi(v)f_x=2 C(\phi,\phi f).\]

Using equation \eqref{generalizedcollisionsoperator}, \begin{align*}
    2C(\phi,\phi f) &= \int_{V_*}\int_{S} w(\phi'(\phi f)_*'+\phi_*'(\phi f)'-\phi (\phi f)_*-\phi_*(\phi f))dSd\mb{v}_*\\
    &= -(\phi f) \int_{V_*}\int_{S} w \phi_*dSd\mb{v}_* - \phi \int_{V_*}\int_{S} w (\phi f)_*dSd\mb{v}_*\\
    & +\int_{V_*}\int_{S} w(\phi'(\phi f)_*'+\phi_*'(\phi f)')dSd\mb{v}_*
\end{align*}
and the Boltzmann equation at first order becomes
\begin{equation}
\label{approx in phi}
\phi f_t+v\phi f_x= -(\phi f) \int_{V_*}\int_{S} w \phi_*dSd\mb{v}_* -\phi \int_{V_*}\int_{S} w (\phi f)_*dSd\mb{v}_* + \int_{V_*}\int_{S} w(\phi'(\phi f)_*'+\phi_*'(\phi f)')dSd\mb{v}_*.
\end{equation}

We seek to further simplify this approximation.  As is, with the reduction of dimensions, it will be impossible for the approximated collisions operator in \eqref{approx in phi} to satisfy all the properties of the original $C(F,F)$.  Minimally, we must require the approximated collisions operator to satisfy the conservation of mass condition.  The expansion of $2 C(\phi,\phi f)$ suggests we consider the following collisions operator.

\begin{proposition}
\label{conservation of mass}
Let $\phi(v)$ be a Maxwellian (normal) distribution such that $\ds \int_{\R}\phi(v)\,dv=1$.  Consider a collisions operator of the form
\begin{equation}
\label{approx collisions operator1}
\tilde{C}(f):= -\phi(v)f(t,x,v)+\phi(v)\int_{V_*}\phi(v_*)f(t,x,v_*)\,dv_*.
\end{equation}  Then $\tilde{C}(f)(v)$ satisfies the conservation of mass condition required of a Maxwell--Boltzmann collisions operator.
\end{proposition}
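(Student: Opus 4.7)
The plan is a direct verification: integrate $\tilde{C}(f)$ against $dv$ over $V = \mathbb{R}$ and show the two terms cancel. There is essentially no obstacle here; the argument is a short bookkeeping calculation exploiting the separation of the integrand into a product of a factor depending on $v$ and a factor depending on $v_*$.

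First I would write
\[
\int_{\mathbb{R}} \tilde{C}(f)(v)\,dv \;=\; -\int_{\mathbb{R}} \phi(v)\,f(t,x,v)\,dv \;+\; \int_{\mathbb{R}}\phi(v)\Bigl(\int_{\mathbb{R}}\phi(v_*)\,f(t,x,v_*)\,dv_*\Bigr)dv.
\]
In the second summand, the inner integral is independent of $v$, so it factors out of the outer integral. Using the normalization hypothesis $\int_{\mathbb{R}}\phi(v)\,dv = 1$, the second term collapses to $\int_{\mathbb{R}}\phi(v_*)\,f(t,x,v_*)\,dv_*$. Renaming the dummy variable $v_* \mapsto v$, this is exactly $\int_{\mathbb{R}}\phi(v)\,f(t,x,v)\,dv$, which cancels the first term.

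To justify the factoring of the iterated integral (and the exchange implicit in relabeling the variable of integration), I would note Fubini's theorem applies provided $\phi(v)\phi(v_*)f(t,x,v_*)$ is integrable over $\mathbb{R}^2$; this is natural under the standing assumption that the perturbation $f$ is such that $\phi f$ gives a genuine density, in particular $\phi f \in L^1_v$ for each $(t,x)$. The only subtlety worth flagging is therefore the implicit integrability assumption on $f$, which should be stated (or carried implicitly from the framework in which $F_\epsilon = \phi(1+\epsilon f)$ is a legitimate density). Once that is granted, the conservation of mass condition $\int_V \tilde{C}(f)\,dv = 0$ follows immediately.
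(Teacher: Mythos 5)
Your proof is correct and follows essentially the same route as the paper's: integrate, factor the $v$-independent inner integral out of the second term, apply the normalization $\int_{\R}\phi(v)\,dv=1$, and cancel against the first term. The extra remark on integrability of $\phi f$ is a reasonable precaution the paper leaves implicit (it is effectively guaranteed by Proposition \ref{properties of C}(1) for $f\in\mathscr{F}_v$), but it does not change the argument.
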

\begin{proof}
\begin{align*}
\int_{V}\tilde{C}(f)dv &= -\int_{V}\phi(v)f(t,x,v)dv+\int_{V}\phi(v)dv\left[\int_{V_*}\phi(v_*)f(t,x,v_*)\,dv_*\right]\\
&= -\int_{\R}\phi(v)f(t,x,v)dv+\int_{\field{R}}\phi(v_*)f(t,x,v_*)\,dv_* \tag{since $\phi$ is Maxwellian}\\
&= 0.
\end{align*}
\end{proof}

It should be noted that by disposing of the term $\ds \dfrac{1}{\phi}\int_{V_*}\int_{S} w(\phi'(\phi f)_*'+\phi_*'(\phi f)')dSdv_*$, we have removed the need to solve the associated two-body problem.  In other words, while we will show that the operator $\tilde{C}(f)$ has many of the important properties of the full collisions operator, we have essentially removed any ``proper" collisions from this model.

Replacing the righthand side of \eqref{approx in phi} by $\tilde{C}(f)$ results in the equation
\begin{equation}
\label{my model with phi}
\phi(v) f_t(t,x,v)+v\phi(v) f_x(t,x,v) = -\phi(v)f(t,x,v)+\phi(v)\int_{V_*}\phi(v_*)f(t,x,v_*)\,dv_*.
\end{equation}
Since $\phi(v)\neq 0$ on all of $\R$, we can simplify further and state the final form of the model we will work with for the remainder of the paper.

\subsection{A 1D Approximation of the Boltzmann Equation: Modeling Fluid Flow along the Real Line}

Let $x\in\field{R}$ represent the position of a molecule and let $v\in \field{R}$ be the velocity of that molecule.  Then the molecular density function $f(t,x,v)$ satisfies the equation
\begin{equation}
\label{PIDE}
\frac{\partial f}{\partial t}(t,x,v)+v\frac{\partial f}{\partial x}(t,x,v)=-f(t,x,v)+\int_{\R} \phi(w)f(t,x,w)dw
\end{equation}
where $\phi(w)$ is the probability density function $\ds \phi(v)=\dfrac{1}{\sqrt{\pi}}e^{-v^2}$.

\subsection{Properties of $C$}

For the rest of this paper, we will be working with the simplified partial integro-differential equation (PIDE) \eqref{PIDE}.  In keeping with the traditional approach, we need to understand the right-hand side of \eqref{PIDE} as a collisions operator.  Define $C(f)$ as
\begin{equation}
\label{approx collisions operator2}
C(f):= -f(t,x,v)+\int_{\R}\phi(w)f(t,x,w)\,dw.
\end{equation}
In order to retain the conservation of mass condition, Proposition \ref{conservation of mass}, our future work will require that we work with the weighted $L_2$ inner product
\begin{equation}
\label{weighted L2}
\langle f(v),g(v)\rangle_{\phi}:=\int_{\R}f(v)g(v)\phi(v)dv.
\end{equation}
Note that in this notation Proposition \ref{conservation of mass} takes the form
\[\langle C(f),1 \rangle_{\phi}=\int_{\R}\tilde{C}(f)\,dv= 0.\]

\begin{proposition}Properties of $C$
\label{properties of C}

Let $Cf:=C(f)$ be the linear operator defined as in \eqref{approx collisions operator2}.  Consider the variables $t$ and $x$ as fixed suppressed parameters and consider $C(f)(v):=C(f)$ as an operator in the variable $v$.  Let $\mathscr{F}_v$ be the class of functions such that
\[\|f(v)\|_{2,\phi}^2=\int_{\R}|f(v)|^2\phi(v)dv  < \infty.\]
\begin{enumerate}
    \item If $f\in \mathscr{F}_v$, then $f(v)\phi(v)$ is $L_1(\field{R})$,
    \item $C(f)=0$ if and only if $f(v)$ is a constant.
    \item $C$ is a bounded self-adjoint operator; $\langle C f,g \rangle_{\phi} =\langle f,C g\rangle_{\phi}$.
    \item $C$ is negative semi-definite; $\langle f,C f\rangle_{\phi}\leq 0$ for all real-valued $f\in \mathscr{F}_v$. Additionally, $\langle f,C f\rangle_{\phi}=0$ if and only if $f$ is a constant.
\end{enumerate}
\end{proposition}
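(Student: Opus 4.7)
The overarching approach is to recognize that $C$ has a very simple structure on $\mathscr{F}_v$: the map $\Pi f := \langle f, 1\rangle_\phi \cdot 1$ is, because $\|1\|_{2,\phi}^2 = \int_\R \phi(v)\,dv = 1$, the orthogonal projection (with respect to $\langle\cdot,\cdot\rangle_\phi$) onto the one-dimensional subspace of constants inside $\mathscr{F}_v$. Thus $C = \Pi - I$, and every claim in the proposition becomes a one-line consequence of this decomposition plus a single Cauchy--Schwarz inequality. The only genuine prerequisite is (1), which is what justifies treating $\langle f,1\rangle_\phi$ as an absolutely convergent integral.

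For (1), I would apply Cauchy--Schwarz to the pair $|f|\sqrt{\phi}$ and $\sqrt{\phi}$, yielding $\int_\R |f(v)|\phi(v)\,dv \le \|f\|_{2,\phi}\|1\|_{2,\phi} = \|f\|_{2,\phi} < \infty$, so $f\phi \in L_1(\R)$. For (2), the identity $C(f)(v) = -f(v) + \langle f,1\rangle_\phi$ shows that $C(f)\equiv 0$ forces $f$ to equal the constant $\langle f,1\rangle_\phi$; conversely, for a constant $c$ we have $C(c) = -c + c\int_\R \phi = 0$.

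For (3), boundedness follows from the triangle inequality $\|Cf\|_{2,\phi} \le \|f\|_{2,\phi} + \|\Pi f\|_{2,\phi} \le 2\|f\|_{2,\phi}$ (using Cauchy--Schwarz once more to control $|\langle f,1\rangle_\phi|$), and self-adjointness reduces to the manifestly symmetric identity $\langle \Pi f, g\rangle_\phi = \langle f,1\rangle_\phi \langle 1,g\rangle_\phi = \langle f, \Pi g\rangle_\phi$ together with the self-adjointness of $-I$. For (4), I would compute directly that $\langle Cf,f\rangle_\phi = -\|f\|_{2,\phi}^2 + |\langle f,1\rangle_\phi|^2$; Cauchy--Schwarz bounds the second term by $\|f\|_{2,\phi}^2\|1\|_{2,\phi}^2 = \|f\|_{2,\phi}^2$, yielding $\langle Cf,f\rangle_\phi \le 0$, while the equality case of Cauchy--Schwarz forces $f$ to be a scalar multiple of $1$, i.e.\ a constant. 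No step presents a real obstacle; the proposition is essentially a restatement of the fact that $C$ is the rank-one perturbation $\Pi - I$ of (minus) the identity on a Hilbert space in which $1$ is a unit vector.
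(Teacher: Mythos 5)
Your proof is correct, and the individual computations for (1)--(3) essentially coincide with the paper's: the paper also proves (1) by writing $\phi = \phi^{1/2}\cdot\phi^{1/2}$ and applying H\"older (i.e.\ Cauchy--Schwarz), proves (2) by the same observation that $C(f)=0$ forces $f$ to equal the number $\int\phi f$, and establishes the same bound $\|Cf\|_{2,\phi}\le 2\|f\|_{2,\phi}$ (via a pointwise estimate that it then squares and integrates, rather than your cleaner triangle-inequality route) together with the same direct symmetry computation for self-adjointness. The genuine divergence is twofold. First, your unifying observation that $C=\Pi-I$ with $\Pi f=\langle f,1\rangle_\phi\cdot 1$ the orthogonal projection onto constants (legitimate since $\|1\|_{2,\phi}=1$) is absent from the paper; it packages all four claims into standard facts about a rank-one perturbation of $-I$ and makes the spectral picture ($C$ has eigenvalues $0$ and $-1$) transparent. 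Second, for (4) the paper bounds $\left[\int f\phi\right]^2\le\int f^2\phi$ by Jensen's inequality applied to the convex function $\alpha\mapsto\alpha^2$ against the probability measure $\phi\,dv$, whereas you use Cauchy--Schwarz against $\|1\|_{2,\phi}=1$; these give the same inequality, but your route has the concrete advantage that the equality case of Cauchy--Schwarz immediately forces $f$ to be a scalar multiple of $1$, which proves the ``$\langle f,Cf\rangle_\phi=0$ only if $f$ is constant'' assertion that the proposition states but the paper's written proof never actually addresses. In short: same skeleton, but your framing is more economical and, on the equality case of (4), more complete than the paper's.
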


\begin{proof}
\begin{enumerate}
    \item Recall that $\ds \phi(v)=\dfrac{e^{-v^2}}{\sqrt{\pi}}$. Note that $\phi^{1/2}(v)\in L_2(\R)$ and that $\|\phi^{1/2}(v)\|_2=1$. Then
        \begin{align*}
        \|\phi(v)f(v)\|_1 &= \|\phi^{1/2}(v)\phi^{1/2}(v)f(v)\|_1\\
        &\leq \|\phi^{1/2}(v)\|_2\|\phi^{1/2}(v)f(v)\|_2 \tag{by H\"{o}lder's inequality}\\
        &= \|f(v)\|_{2,\phi} \tag{by defintion of $\mathscr{F}_v$}.\\
        &< \infty.
        \end{align*}

    \item

        Let $C (f)=0$.  Then
    \[f(v)=\int_{\field{R}}\phi(y)f(y)\,dy.\]  Since $\phi(y)f(y)$ is $L_1(\field{R})$, $f(v)$ must be a constant.

    If $f(v)$ is constant, $C (f)(v)=0$ since $\ds \int_{V}\phi(v)\,dv=1$.

    \item First we will show that $C$ is a bounded operator on $\mathscr{F}_v$.
    \begin{align*}
        |Cf(v)| &\leq |f(v)|+\int_{\R}|\phi(w)f(w)|dw\\
        &\leq  |f(v)|+\|f(v)\|_{2,\phi}.
    \end{align*}
    Then,
    \begin{equation*}
        |Cf(v)|^2 \leq |f(v)|^2+2|f(v)|\|f(v)\|_{2,\phi}+\|f(v)\|_{2,\phi}^2
    \end{equation*}
    and
    \begin{align*}
        \|Cf(v)\|_{2,\phi}^2 &= \int_{\R}|Cf(v)|^2\phi(v)\,dv\\
        &\leq \int_{\R}\left(|f(v)|^2\phi(v)+2|f(v)|\phi(v)\|f(v)\|_{2,\phi}+\|f(v)\|_{2,\phi}^2\phi(v)\right)\,dv\\
         &\leq \|f(v)\|_{2,\phi}^2+2\|f(v)\|_{2,\phi}^2+\|f(v)\|_{2,\phi}^2\\
         &= 4\|f(v)\|_{2,\phi}^2.
    \end{align*}

    Proving $C$ is self-adjoint is simply definition chasing:
    \begin{align*}
        \langle C f,g\rangle_{\phi} &=  \int_{\field{R}}\left(-f(\alpha)+\int_{\field{R}}\phi(y)f(y)\,dy\right)g(\alpha)\phi(\alpha)d\alpha\\
        &= \int_{\field{R}}(-f(\alpha)g(\alpha)\phi(\alpha))d\alpha+\left(\int_{\field{R}}f(y)\phi(y)\,dy\right)\left(\int_{\field{R}}g(\alpha)\phi(\alpha)d\alpha\right)\\
        &= \int_{\field{R}}(-f(\alpha)g(\alpha)\phi(\alpha))d\alpha+\left(\int_{\field{R}}g(y)\phi(y)dy\right)\left(\int_{\field{R}}f(\alpha)\phi(\alpha)\,d\alpha\right)\\
        &= \int_{\field{R}}f(\alpha)\left(-g(\alpha)+\int_{\field{R}}g(y)\phi(y)dy\right)\phi(\alpha)\,d\alpha\\
        &= \langle f,C g\rangle_{\phi}.
    \end{align*}

    \item \begin{eqnarray*}
        \langle C (f),f\rangle_{\phi} &=& \int_{\field{R}}\left(-f(\alpha)+\int_{\field{R}}\phi(y)f(y)\,dy\right)f(\alpha)\phi(\alpha)d\alpha\\
        \ &=& -\int_{\field{R}}f^2(\alpha)\phi(\alpha)d\alpha+\left[\int_{\field{R}}f(\alpha)\phi(\alpha)\right]^2.
        \end{eqnarray*}
        \textit{Claim}: $\ds \left[\int_{\field{R}}f(\alpha)\phi(\alpha)\right]^2\leq \int_{\field{R}}f^2(\alpha)\phi(\alpha)d\alpha$.\\
        \textit{Reason}: Recall that $\phi(y)=\frac{1}{\sqrt{\pi}}e^{-y^2}$.  Consider $\ds \int f\phi\,dy$.  Define $\ds \Phi(y):=\int_{-\infty}^y \phi(t)\,dt$.  Then $d\Phi(y)=\phi(y)dy$.  Note that $\ds \int_{\field{R}}d\Phi(y)=\int_{\field{R}}\phi(y)\,dy=1$.  Now $\ds \int f\phi\,dy = \int f d\Phi(y)$.  Let $F(\alpha):=\alpha^2$ and note that $F$ is a convex function.  By Jensen's Inequality, \begin{eqnarray*}
        F\left(\int_{\field{R}}f\,d\Phi(y)\right)&\leq& \int_{\field{R}}F(f)\,d\Phi(y)\\
        \left[\int_{\field{R}} f\,d\Phi(y)\right]^2&\leq&\int_{\field{R}}f^2\,d\Phi(y)\\
        \left[\int_{\field{R}} f\phi\,dy\right]^2&\leq&\int_{\field{R}}f^2\phi\,dy.
        \end{eqnarray*}
        Hence $\ds \langle C (f),f\rangle_{\phi}=-\int_{\field{R}}f^2(\alpha)\phi(\alpha)d\alpha+\left[\int_{\field{R}}f(\alpha)\phi(\alpha)\right]^2\leq 0$.
\end{enumerate}
\end{proof}

\section{The Space of Grossly Determined Solutions}
\label{GDS}

\subsection{Introduction}

In the full kinetic theory each solution of the Maxwell--Boltzmann equation leads immediately to a collection of fields that satisfy the five balance laws, Proposition \ref{balancelaws}.  In classical gas dynamics one wishes to solve the five balance laws for the gross condition of the gas (density, momentum and energy) without any appeal to the kinetic theory. Solving the balance laws directly, however, is impossible as we have introduced additional unknown functions (the pressure tensor $\mb{P}$ and the energy flux vector $\mb{T}$). The goal of some classical iterative solution constructions (for example, the Chapman--Enskog procedure) has been to convert these new unknowns into functions of the gross condition of the gas and thereby ``close" the balance laws and create PDEs that must be solved. Our goal here is similar, but at the level of the Maxwell--Boltzmann equation rather than at the level of the balance laws. Specifically one might hope to find a class of solutions for the molecular density $F$, the grossly determined solutions (GDS), that are completely determined by their own gross fields. For this class, then, $\mb{P}$ and $\mb{T}$ are functions of the gross fields and then the balance laws become a well defined system of PDEs that we can identify with classical gas dynamics.

We endeavor to accomplish this goal for \begin{eqnarray}
\label{thepde}
\frac{\partial f}{\partial t}(t,x,v)+v\frac{\partial f}{\partial x}(t,x,v)&=&-f(v,x,t)+\int_{\R} \phi(w)f(w,x,t)dw
\end{eqnarray}
where $\phi(w)$ is the probability density function $\ds \phi(v):=\frac{1}{\sqrt{\pi}}e^{-v^2}$ (i.e.\ $\ds \int_{\R}\phi(v)dv=1$).  That is, we will search for a set of grossly determined solutions for our simplified problem that represent a ``classical" theory of gas dynamics embedded in our ``kinetic" theory of gases.

\subsection{Derivation of the Continuity Equation}

By construction, we can define only one gross field.  The mass-density is \[\rho(t,x)=m\int_{\field{R}}\phi(v)f(t,x,v)\,dv.\]  For simplicity we let $m=1$ and define the density function $\rho(t,x)$: \[\rho(t,x):=\int_{\field{R}}\phi(v)f(t,x,v)\,dv.\] As a result of the one gross field, we do not expect to be able to derive more than one balance law.

\begin{proposition}The associated continuity equation is \begin{equation}
 \label{continuity eqn}
 \frac{\partial \rho}{\partial t}+\frac{\partial T}{\partial x}=0
 \end{equation}
 where \begin{equation}
 \label{T}
  T(t,x)=\int_{\R} \phi(v) v f(t,x,v)\,dv.
 \end{equation}
\end{proposition}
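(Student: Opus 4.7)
The plan is to derive the continuity equation by taking the $\phi$-weighted velocity moment of the PDE \eqref{thepde}, exactly in the spirit of the derivation of the classical continuity equation given in the proof of Proposition \ref{balancelaws}. Concretely, I would multiply both sides of \eqref{thepde} by $\phi(v)$ and integrate over $v\in\R$, so that the left-hand side produces $\partial_t\rho + \partial_x T$ (after interchanging derivatives with the integrals), and the right-hand side vanishes because it is exactly $\langle C(f), 1\rangle_\phi$, which is zero by the conservation-of-mass property established in Proposition \ref{conservation of mass}.

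In more detail, for the left-hand side I would write
\[
\int_\R \phi(v)\,\partial_t f(t,x,v)\,dv + \int_\R v\,\phi(v)\,\partial_x f(t,x,v)\,dv,
\]
and justify pulling the $t$- and $x$-derivatives outside the velocity integrals; assuming enough regularity and integrability of $f$ (which is implicit in treating $f$ as a classical solution and in the Proposition \ref{properties of C} setting $f(\cdot,\cdot,v)\in\mathscr{F}_v$), this gives $\partial_t \rho(t,x) + \partial_x T(t,x)$ by the definitions of $\rho$ and $T$. For the right-hand side,
\[
\int_\R \phi(v)\Big(-f(t,x,v) + \int_\R \phi(w) f(t,x,w)\,dw\Big)\,dv
= -\rho(t,x) + \Big(\int_\R \phi(v)\,dv\Big)\rho(t,x) = 0,
\]
using that $\phi$ is a probability density so $\int \phi=1$. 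Adding the two computations yields \eqref{continuity eqn}.

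The only real subtlety, and the main ``obstacle,'' is the interchange of the spatial and temporal derivatives with the velocity integration on the left-hand side; this is a routine dominated-convergence argument provided the derivatives $\partial_t f$ and $v\,\partial_x f$ are dominated by $\phi$-integrable functions uniformly in $(t,x)$ on compact sets. Since the rest of the paper treats $f(t,x,\cdot)\in\mathscr{F}_v$ and works in the weighted space defined in \eqref{weighted L2}, this regularity can be assumed throughout; no further estimates beyond those already invoked in Proposition \ref{properties of C} are needed. The vanishing of the right-hand side is then just a direct application of the conservation-of-mass identity $\langle C(f),1\rangle_\phi = 0$.
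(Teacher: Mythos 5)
Your proposal is correct and follows essentially the same route as the paper: multiply \eqref{thepde} by $\phi(v)$, integrate over $v$, interchange the $t$- and $x$-derivatives with the velocity integral to produce $\partial_t\rho+\partial_x T$, and observe that the right-hand side vanishes since $\int_\R\phi\,dv=1$ (the conservation-of-mass property of $C$). Your added remark on justifying the interchange of differentiation and integration is a reasonable supplement, but the argument is otherwise the paper's own.
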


\begin{proof}
By the definition of $\rho(t,x)$ we see that
\[\frac{\partial f}{\partial t}(t,x,v)+v\frac{\partial f}{\partial x}(t,x,v)=-f(t,x,v)+\rho(t,x).\]
Multiply the last equation by the the probability density function $\phi(v)$ and integrate over the velocity field $V=\R$.  This results in the continuity equation:
\begin{align*}
\int_{\R} \phi(v) \frac{\partial f}{\partial t}(t,x,v)\,dv +\int_{\R} \phi(v) v \frac{\partial f}{\partial x}(t,x,v)\,dv &= -\int_{\R} f(t,x,v)\phi(v) \,dv+ \int_{\R} \phi(v) \rho(t,x)\, dv\\
\frac{\partial}{\partial t}\left(\int_{\R} \phi(v) f(t,x,v)\,dv\right) + \frac{\partial}{\partial x}\left( \int_{\R} \phi(v) v f(t,x,v)\,dv\right)&= -\rho(t,x)+ \rho(t,x) \int_{\R} \phi(v) \, dv\\
\frac{\partial}{\partial t}\left(\int_{\R} \phi(v) f(t,x,v)\,dv\right) + \frac{\partial}{\partial x}\left( \int_{\R} \phi(v) v f(t,x,v)\,dv\right)&=0.
\end{align*}
The term $\ds T(t,x)=\int_{\R} \phi(v) v f(t,x,v)\,dv$ plays the role of mass flux and this results in the balance law \[\frac{\partial \rho}{\partial t}+\frac{\partial T}{\partial x}=0.\]
\end{proof}

As we had in the traditional theory, a new unknown function $T$ has been added to the system.  However, if we can describe $T$ as a function of $\rho$, then this will ``close" the Continuity Equation in $\rho(t,x)$ and lead to the class of grossly determined solutions.

\subsection{Derivation of the Grossly Determined Solutions}

\subsubsection{Observations and Assumptions on the form of the GDS}

For this problem, there is only one gross field property -- mass density.  In this setting, the question posited by Truesdell and Muncaster is ``Could there be a special class of solutions of \eqref{thepde}, each determined in some way by their own density field  $\rho$?"

Assume that a solution $f$ is dependent on the density field $\rho(t,x)$.  That is, $f(t,x,v)=G[\rho(t,\circ)](x,v)$.  Then $\ds T(t,x)=\int_{\R} v \phi(v) G[\rho(t,\circ)](x,v)\,dv$ is a function of $\rho$.  Given that $T$ is now a function of $\rho$, we see that the continuity equation $\rho_t+T_x=0$ is a closed system PDE in $\rho$ alone. Moreover, if we are able to determine $G$, we should be able to solve this PDE. Additionally, the gross field property can now be written  \[\rho(t,x)=\int_{\field{R}}\phi(v)G[\rho(t,\circ)](x,v)\,dv\] for all $\rho$.  We now look for a way to find (or approximate) $G$.

By self-similarity conditions, since $\eqref{thepde}$ is autonomous in $x$ (and $t$), one expects solutions $f$ to be invariant with respect to translations in $x$.  Additionally, since the original problem is a linear PIDE, there is no harm in hoping to find solutions in which $G$ is linear in $\rho$.  In H\"{o}rmander's Linear Partial Differential Operators \cite[pg 15]{Hormander}, he proves an interesting representation theorem for linear maps of distributions:
\begin{lemma}
\label{thelemma}
Let $U$ be a linear mapping of $C_0^{\infty}(\R^n)$ into $C^{\infty}(\R^n)$ which commutes with translations and is continuous in the sense that $U\psi_j\rightarrow 0$ in $C^{\infty}(\R^n)$ if the sequence $\psi_j\rightarrow 0$ in $C_0^{\infty}(\R^n)$.  Then there exists one and only one distribution $u$ such that $U\psi = u\ast \psi$, $\psi \in C_0^{\infty}(\R^n)$.
\end{lemma}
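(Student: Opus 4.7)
The plan is to construct the desired distribution $u$ explicitly by evaluating $U$ on reflected test functions at the origin, and then to verify that $U$ and convolution-with-$u$ coincide by invoking translation invariance. Uniqueness will follow from a short distributional argument.

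First, writing $\psi^\vee(y):=\psi(-y)$ for the reflection operator, I would define the candidate
\[
\langle u,\psi\rangle := (U\psi^\vee)(0), \qquad \psi\in C_0^\infty(\R^n).
\]
Reflection is a continuous linear map of $C_0^\infty(\R^n)$ to itself, so if $\psi_j\to 0$ in $C_0^\infty(\R^n)$ then $\psi_j^\vee\to 0$ as well, and by hypothesis $U\psi_j^\vee\to 0$ in $C^\infty(\R^n)$; since point-evaluation at $0$ is continuous in the $C^\infty$ topology, $\langle u,\psi_j\rangle\to 0$. This exhibits $u$ as a distribution on $\R^n$.

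Next, with $\tau_a f(x):=f(x-a)$, I would compute the convolution
\[
(u\ast\psi)(x)=\langle u(y),\psi(x-y)\rangle
\]
and observe that, viewed as a function of $y$, one has $\psi(x-y)=(\tau_{-x}\psi)^\vee(y)$, since $(\tau_{-x}\psi)^\vee(y)=(\tau_{-x}\psi)(-y)=\psi(-y+x)$. Substituting this into the definition of $u$ gives
\[
(u\ast\psi)(x)=\bigl(U(\tau_{-x}\psi)\bigr)(0).
\]
The translation-invariance hypothesis $U\circ\tau_a=\tau_a\circ U$ yields $U(\tau_{-x}\psi)=\tau_{-x}(U\psi)$, and evaluation at $0$ produces $(U\psi)(x)$. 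Hence $U\psi = u\ast\psi$ pointwise on $\R^n$, establishing existence.

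For uniqueness, if $u_1$ and $u_2$ both represent $U$ by convolution then $(u_1-u_2)\ast\psi\equiv 0$ for every $\psi\in C_0^\infty(\R^n)$. Evaluating at $x=0$ and using the identity $(v\ast\varphi^\vee)(0)=\langle v,\varphi\rangle$ forces $\langle u_1-u_2,\varphi\rangle=0$ for all $\varphi$, so $u_1=u_2$ in $\mathcal{D}'(\R^n)$. The whole argument is essentially formal; the only real care required is bookkeeping the reflection $\vee$ and translation $\tau_{-x}$ so that the three expressions—the defining formula for $u$, the convolution $u\ast\psi$, and the output $U(\tau_{-x}\psi)(0)$—pair off correctly. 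There is no substantive analytic obstacle, because everything rests on the given continuity hypothesis on $U$ together with the standard fact that reflection and translation act continuously on $C_0^\infty(\R^n)$.
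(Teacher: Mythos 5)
Your argument is correct and is essentially the standard proof of this representation theorem: define $\langle u,\psi\rangle:=(U\psi^\vee)(0)$, check distributional continuity via the hypothesis on $U$ together with continuity of reflection and of point evaluation, and then use $\psi(x-y)=(\tau_{-x}\psi)^\vee(y)$ plus translation invariance to identify $u\ast\psi$ with $U\psi$; the reflection/translation bookkeeping and the uniqueness step all check out. Note that the paper itself gives no proof of this lemma --- it is quoted verbatim from H\"ormander's \emph{Linear Partial Differential Operators} with a citation --- and your proof is in fact the same one H\"ormander gives, so there is nothing to reconcile.
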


Again, we have the freedom to create a solution (dependent on $\rho$) by any means necessary.  As we are already embracing an ansatz, we will assume that ``$G$ is continuous at zero".  In $G$'s current form, it is dependent on $x$ and $v$.  If we can show that $G[\rho(t,\circ)](x,v)$ is invariant in $x$, then the lemma suggests we should look for grossly determined solutions $f$ that are convolutions with $\rho$.

\begin{proposition}If a solution of the form $f(t,x,v)=G[\rho(t,\circ)](x,v)$ is invariant in the spacial dimension, then it can be written in the form $f(t,x,v)=G[\rho(t,x+\circ)](0,v)$.  In other words, ``the translation of a grossly determined solution yields another grossly determined solution" implies that the solution has the form $f(t,x,v)=G[\rho(t,x+\circ)](0,v)$.
\end{proposition}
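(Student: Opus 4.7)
My plan is to unpack the spatial translation-invariance hypothesis, apply it in two different ways to the same translated function, and then specialize the spatial variable to obtain the claimed form.

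First, I would formalize the invariance. Let $\tau_h$ denote the translation operator in the spatial variable, so that $(\tau_h \rho(t,\circ))(y) = \rho(t, y+h)$. The statement that ``the translation of a grossly determined solution yields another grossly determined solution'' means that for every $h \in \R$ the map $(t,x,v) \mapsto f(t,x+h,v)$ is itself a grossly determined solution, and its associated density field is precisely $\rho(t, x+h)$. Since $G$ by definition encodes the functional dependence of a GDS on its own density, this forces
\begin{equation*}
f(t, x+h, v) \;=\; G[\tau_h \rho(t, \circ)](x, v).
\end{equation*}

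Second, I would write the same quantity directly from the defining relation $f(t,x,v) = G[\rho(t,\circ)](x,v)$ by evaluating at $x+h$:
\begin{equation*}
f(t, x+h, v) \;=\; G[\rho(t, \circ)](x+h, v).
\end{equation*}
Equating the two expressions gives the functional identity
\begin{equation*}
G[\tau_h \rho(t, \circ)](x, v) \;=\; G[\rho(t, \circ)](x+h, v) \quad\text{for all } t, h, x, v.
\end{equation*}

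Third, I would specialize $x = 0$ and rename $h$ as $x$. Noting that $(\tau_x \rho(t, \circ))(y) = \rho(t, x+y)$, which in the paper's placeholder notation is $\rho(t, x + \circ)$, this immediately yields
\begin{equation*}
f(t, x, v) \;=\; G[\rho(t,\circ)](x,v) \;=\; G[\rho(t, x + \circ)](0, v),
\end{equation*}
which is the desired representation.

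The main obstacle here is interpretational rather than computational: one must pin down what ``invariance in the spatial dimension'' actually means, since the statement is vacuous without it. The correct reading is that $G$ intertwines with the spatial translation $\tau_h$ in the sense above, and once this is granted the argument collapses to a single substitution. This also cleanly sets the stage for the next step of the paper, where the map $\rho(t,\circ) \mapsto G[\rho(t,\circ)](0,v)$ will be a translation-commuting linear map to which H\"ormander's lemma applies, producing the sought convolution representation of $f$ in the $x$ variable.
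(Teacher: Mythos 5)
Your proposal is correct and follows essentially the same route as the paper: both identify the translated solution's density with the translated density, equate the two expressions for $f(t,x+h,v)$, and then set $x=0$ and rename $h$ as $x$. The only cosmetic difference is that the paper explicitly verifies $\rho_y(t,x)=\rho(t,x+h)$ by integrating $\phi(v)f(t,x+h,v)$ over $v$, whereas you fold that identification into your reading of the invariance hypothesis.
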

\begin{proof}
Let $f(t,x,v)=G[\rho(t,\circ)](x,v)$.  For fixed $y$, assume that $f(t,x+y,v)$ is another solution in this class.  Then
$f(t,x+y,v)=G[\rho_y(t,\circ)](x,v)$ for some different density field $\rho_y$.  What is the connection between $\rho_y$ and $\rho$?  We have
\[ \rho(t,x)=\int_{\R} \phi(v)G[\rho(t,\circ)](x,v)\,dv =\int_{\R} \phi(v)f(t,x,v)\,dv.\]
Then \begin{eqnarray*}
    \rho_y(t,x)&=&\int_{\R} \phi(v)G[\rho_y(t,\circ)](x,v)\,dv\\
    \ &=& \int_{\R} \phi(v)f(t,x+y,v)\,dv\\
    \ &=& \rho(t,x+y).
\end{eqnarray*}
So, $f(t,x+y,v)=G[\rho(t,\circ+y)](x,v)$.  Redefining the variables, we let $x=0$ and $y=x$.
Then
\[f(t,x,v)=G[\rho(t,x+\circ)](0,v).\]
\end{proof}

Thus, by H\"{o}rmander's lemma, $f$ is a convolution and can be represented in the form: \begin{equation}
\label{convolution soln}
f(t,x,v)=\int_{\field{R}}K_v(y)\rho(t,x-y)\,dy.
\end{equation}  While in this context, $K_v(y)$ is being interpreted as the kernel in the spacial dimension, we use the notation $K_v$ to remember that this portion of the solution will also be dependent on velocity.

\subsubsection{Solving for the Kernel $K_v(y)$}

Assume that $\ds f(t,x,v)=\int_{\field{R}}K_v(y)\rho(t,x-y)\,dy$ and substitute $f$ into \eqref{thepde}.  This results in the equation \[ \frac{\partial}{\partial t}\left(\int_{\field{R}} K_v(y)\rho(t,x-y)\,dy\right) + v \frac{\partial}{\partial x} \left(\int_{\field{R}} K_v(y)\rho(t,x-y)\,dy\right)=-\int_{\field{R}} K_v(y)\rho(t,x-y)\,dy + \rho(t,x). \] To rid this equation of convolutions, we use the Fourier transform in the spacial dimension $x$. Define \[\hat{g}(t,\xi,v):=\int_{\R}e^{-i\xi x}g(t,x,v)dx.\]
Applying the Fourier transform to the restated PIDE above yields
\begin{equation}
\label{FTpde}
\widehat{K_v}(\xi) \frac{\partial \hat{\rho}}{\partial t}(t,\xi) + v i \xi \widehat{ K_v}(\xi)\hat{\rho}(t,\xi) = - \widehat{K_v}(\xi)\hat{\rho}(t,\xi) + \hat{\rho}(t,\xi).
\end{equation}

Additionally, we can transform the gross field property.  Using the convolution solution, the density becomes
\[ \rho(t,x)=\int_{\R} \left(\int_{\field{R}}K_v(y)\rho(t,x-y)\,dy\right)\phi(v)\,dv. \]
Under the transform, we get \begin{equation*}
 \hat{\rho}(t,\xi)=\int_{\R} \widehat{K_v}(\xi)\hat{\rho}(t,\xi)\phi(v)\,dv.
 \end{equation*}
 Or,
\begin{equation}
 \label{transformed rho}
 \hat{\rho}(t,\xi)\left(1-\int_{\R} \widehat{K_v}(\xi)\phi(v)\,dv\right)=0.
 \end{equation}
 Upon the support of $\hat{\rho}(t,\xi)$, equation (\ref{transformed rho}) requires that
\begin{equation}
\label{FTdensity}
\int_{\R} \widehat{K_v}(\xi)\phi(v)\,dv=1.
\end{equation}

Last, we transform the continuity equation \eqref{continuity eqn}:
\begin{align*}
\frac{\partial \rho}{\partial t}(t,x)+\frac{\partial T}{\partial x}(t,x)&= 0 \\
\frac{\partial \rho}{\partial t}(t,x)+\frac{\partial}{\partial x}\left(\int_{\R} \phi(v) v f(t,x,v)\,dv\right)&= 0 \tag{by \eqref{T}}\\
\frac{\partial \rho}{\partial t}(t,x)+\left(\int_{\R} \phi(v) v \left[\int_{\field{R}}K_v(y)\frac{\partial \rho}{\partial x}(t,x-y)\,dy\right]\,dv\right)&= 0. \tag{by \eqref{convolution soln}}
\end{align*}
Applying the Fourier transform, we obtain
\begin{align*}
\frac{\partial \hat{\rho}}{\partial t}(t,\xi)+\int_{\R}\phi(v) v i\xi \widehat{K_v}(\xi)\hat{\rho}(t,\xi) \,dv&= 0\\
\frac{\partial \hat{\rho}}{\partial t}(t,\xi)+i\xi \hat{\rho}(t,\xi) \int_{\R} \phi(v) v \widehat{K_v}(\xi)\,dv&= 0\\
\frac{\partial \hat{\rho}}{\partial t}(t,\xi)+i\xi \hat{\rho}(t,\xi) k(\xi)&= 0
\end{align*}
where
\begin{equation}
\label{little k}
k(\xi):= \int_{\R} \phi(v) v \widehat{K_v}(\xi)\,dv.
\end{equation}
Then, the transformed continuity equation becomes
\begin{equation}
\label{FTcontinuity2}
\frac{\partial \hat{\rho}}{\partial t}(t,\xi)= - i\xi \hat{\rho}(t,\xi) k(\xi)
\end{equation}
Note that we have succeeded into converting the balance law into a separable PDE.  Given an initial density condition $\rho(0,\xi)$, we see that the transformed representation of $\rho(t,x)$ is
\[\hat{\rho}(t,\xi)=\hat{\rho}_0(\xi)e^{-i\xi k(\xi)t}\]
where $\hat{\rho}_0(\xi):=\hat{\rho}(0,\xi)$.
We see that understanding $\widehat{K_v}(\xi)$ and $\hat{\rho}(t,\xi)$ requires a better understanding of $k(\xi)$.

Substituting \eqref{FTcontinuity2} into the transformed PIDE \eqref{FTpde} yields
\begin{equation*}
\widehat{K_v}(\xi)\left(- i\xi \hat{\rho}(t,\xi) k(\xi)\right) + v i \xi \widehat{ K_v}(\xi)\hat{\rho}(t,\xi) = - \widehat{K_v}(\xi)\hat{\rho}(t,\xi) + \hat{\rho}(t,\xi).
\end{equation*}
Again requiring that $\hat{\rho}(t,\xi)\not\equiv 0$, we can simplify to \begin{equation*}
\widehat{K_v}(\xi)\left(- i\xi  k(\xi)\right) + v i \xi \widehat{ K_v}(\xi) = - \widehat{K_v}(\xi) + 1.\
\end{equation*}
This results in a representation of $\widehat{K_v}(\xi)$ in terms of $k(\xi)$.
\begin{equation}
\label{Kvhat}
\widehat{K_v}(\xi)=\frac{1}{-i\xi k(\xi)+i\xi v +1}.
\end{equation}  Moreover, apart from knowing $k$, we have an explicit form of $\widehat{K_v}$ in the variable $v$ alone.

Combining \eqref{little k} and \eqref{Kvhat} we find a representation of $k(\xi)$ that suppresses $\widehat{K_v}$: \[ \ds k(\xi)=\int_{\R} \frac{v\phi(v)}{-i\xi k(\xi)+i\xi v +1}\,dv.\]  Now, for any fixed value of $\xi$, $k(\xi)$ will yield a number in $\C$.  So, for fixed $\xi$, let that number be $k_{\xi}=r+ai$.  Then
\begin{align*}
    k_{\xi} &= \int_{\R} \frac{v\phi(v)}{-i\xi k_{\xi}+i\xi v +1}\,dv\\
    r+ai &= \int_{\R} \frac{v\phi(v)}{-i\xi (r+ai)+i\xi v +1}\,dv\\
     &= \int_{\R} \frac{v\phi(v)(1+a\xi)}{(1+a\xi)^2+(v-r)^2\xi^2}\,dv-i\int_{\R} \frac{v\phi(v)(v-r)\xi}{(1+a\xi)^2+(v-r)^2\xi^2}\,dv.
\end{align*}
Note that if we let $k_{\xi}$ be pure imaginary (i.e. $r=0$), then the real-part integral vanishes as $\dfrac{v\phi(v)(1+a\xi)}{(1+a\xi)^2+(v\xi)^2}$ is an odd function in $v$.  Again, we have the ability to simplify any way we deem appropriate.  We are just trying to find a class of solutions in which each is dependent on its own density.  So, we let $r=0$.  Then \[k_{\xi}=ai= -i\int_{\R} \frac{v^2\phi(v)\xi}{(1+a\xi)^2+v^2\xi^2}\,dv.\]  Hence,
\begin{align*}
   a &= -\int_{\R} \frac{v^2\phi(v)\xi}{(1+a\xi)^2+v^2\xi^2}\,dv\\
   \left(\frac{b-1}{\xi}\right) &= -\int_{\R} \frac{\xi v^2\phi(v)}{b^2+\xi^2 v^2}\,dv \tag{ where $b=1+a\xi$}\\
   1-b &= \int_{\R} \frac{v^2\phi(v)}{(b/\xi)^2+v^2}\,dv\\
    1-\xi c &= \int_{\R} \frac{v^2\phi(v)}{c^2+v^2}\,dv \tag{ where $c=b/\xi$}\\
    1-\xi c &= \int_{\R} \frac{-c^2\phi(v)}{c^2+v^2}\,dv+\int_{\R} \phi(v)\,dv\\
    1-\xi c &= \int_{\R} \frac{-c^2\phi(v)}{c^2+v^2}\,dv+1 \tag{ by definition of $\phi(v)$}\\
    \xi &= \int_{\R} \frac{c\phi(v)}{c^2+v^2}\,dv.
\end{align*}

This last equation results in a constraint on the freedom of $\xi$ in our class of solutions.  To better understand this, let us define the function $\Xi(c)$ as follows:
\begin{equation}
    \label{xi of c}
    \Xi(c) := \int_{\R} \frac{c\phi(v)}{c^2+v^2}\,dv
\end{equation}
Note that we now are able to represent $\xi$ as a parametric function of $c$.  To examine the values of $\xi$ defined over the range of $c$, we begin with the following graphical observation, Figure \ref{xi versus c}.

\begin{figure}[here]
  \centering
\begin{tikzpicture}[>=stealth]
\draw (-2*\scale,0*\scale) -- (2*\scale,0*\scale) node[anchor=west]{$c$};
\draw (0*\scale,-2*\scale) -- (0*\scale,2*\scale) node[anchor=south]{$\xi$};
\draw (.1*\scale,2*\scale) -- (0*\scale,2*\scale) node[anchor=east]{$2$};
\draw (.1*\scale,1*\scale) -- (0*\scale,1*\scale) node[anchor=east]{$1$};
\draw (.1*\scale,-2*\scale) -- (0*\scale,-2*\scale) node[anchor=east]{$-2$};
\draw (.1*\scale,-1*\scale) -- (0*\scale,-1*\scale) node[anchor=east]{$-1$};
\draw (2*\scale,.1*\scale) -- (2*\scale,0*\scale) node[anchor=north]{$2$};
\draw (1*\scale,.1*\scale) -- (1*\scale,0*\scale) node[anchor=north]{$1$};
\draw (-2*\scale,.1*\scale) -- (-2*\scale,0*\scale) node[anchor=north]{$-2$};
\draw (-1*\scale,.1*\scale) -- (-1*\scale,0*\scale) node[anchor=north]{$-1$};
\draw[thick] plot[smooth] coordinates{(0*\scale, 1.75263*\scale) (0.2*\scale, 1.4198*\scale) (0.4*\scale, 1.17853*\scale) (0.6*\scale, 0.998537*\scale) (0.8*\scale, 0.860816*\scale) (1*\scale, 0.753057*\scale) (1.2*\scale, 0.667063*\scale) (1.4*\scale, 0.597234*\scale) (1.6*\scale, 0.539653*\scale) (1.8*\scale, 0.491519*\scale) (2*\scale, 0.450792*\scale)};
\draw[thick] plot[smooth] coordinates{(0*\scale, -1.75263*\scale) (-0.2*\scale, -1.4198*\scale) (-0.4*\scale, -1.17853*\scale) (-0.6*\scale, -0.998537*\scale) (-0.8*\scale, -0.860816*\scale) (-1*\scale, -0.753057*\scale) (-1.2*\scale, -0.667063*\scale) (-1.4*\scale, -0.597234*\scale) (-1.6*\scale, -0.539653*\scale) (-1.8*\scale, -0.491519*\scale) (-2*\scale, -0.450792*\scale)};
\end{tikzpicture}
\caption{the graph of $ \xi=\Xi(c)$} \label{xi versus c}
\end{figure}
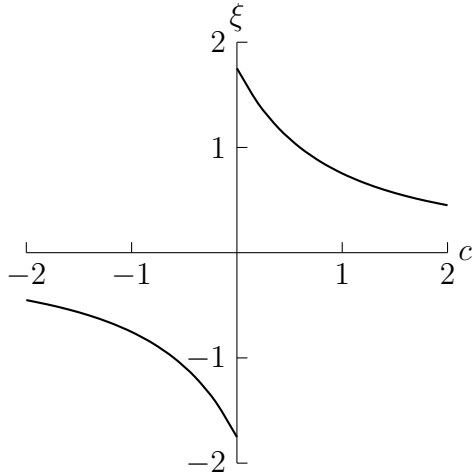

It appears that for the solution class, our transform variable is bounded.  In fact, we can show that $|\xi|=|\Xi(c)| \in (0, \sqrt{\pi})$.
\begin{claim}
 Let $\Xi(c)$ be defined as in \eqref{xi of c}. Then $\ds \lim_{c\rightarrow 0^+} \Xi(c)=\sqrt{\pi}$.
\end{claim}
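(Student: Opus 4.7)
The plan is to recognize the integrand as essentially the Poisson kernel (up to normalization) and exploit its approximate-identity behavior as $c \to 0^+$. The cleanest way to execute this is via a rescaling substitution that pulls the small parameter $c$ out of the rational factor and places it entirely inside $\phi$.

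First, I would substitute $v = cu$, so that $dv = c\,du$ and $c^2 + v^2 = c^2(1+u^2)$. This turns the integral into
\begin{equation*}
\Xi(c) = \int_{\R} \frac{c\,\phi(v)}{c^2+v^2}\,dv = \int_{\R} \frac{\phi(cu)}{1+u^2}\,du.
\end{equation*}
The virtue of this form is that the $c$-dependence is now only in $\phi(cu)$, while the factor $1/(1+u^2)$ is a fixed, integrable function.

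Next, I would pass to the limit $c \to 0^+$ under the integral by dominated convergence. Pointwise, $\phi(cu) \to \phi(0) = 1/\sqrt{\pi}$ for every $u \in \R$. For the dominating function, note that $\phi$ attains its maximum at $0$, so $|\phi(cu)/(1+u^2)| \le \phi(0)/(1+u^2) = 1/(\sqrt{\pi}(1+u^2))$, which is integrable over $\R$. Hence
\begin{equation*}
\lim_{c \to 0^+}\Xi(c) = \int_{\R} \frac{\phi(0)}{1+u^2}\,du = \frac{1}{\sqrt{\pi}}\int_{\R}\frac{du}{1+u^2} = \frac{\pi}{\sqrt{\pi}} = \sqrt{\pi},
\end{equation*}
using the standard arctangent antiderivative.

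There is really no serious obstacle here; the only thing to be careful about is justifying the interchange of limit and integral, which is why I would spell out the domination step explicitly rather than invoke the Poisson-kernel/approximate-identity language informally. An alternative route would be to argue directly that $\frac{1}{\pi}\cdot\frac{c}{c^2+v^2}$ is an approximate identity converging weakly to $\delta_0$ so that $\int \frac{c}{c^2+v^2}\phi(v)\,dv \to \pi\phi(0) = \sqrt{\pi}$, but this requires the same continuity and integrability input as the substitution argument, and the substitution version is more self-contained.
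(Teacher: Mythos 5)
Your proof is correct and follows essentially the same route as the paper: the substitution $v = cu$ to isolate the $c$-dependence inside $\phi$, followed by passing to the limit under the integral and evaluating $\int_{\R}\frac{du}{1+u^2}=\pi$. The only difference is that you explicitly justify the interchange of limit and integral by dominated convergence, a step the paper performs implicitly.
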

\begin{proof} Note that for this limit, $c>0$.  Then \begin{align*}
            \lim_{c\rightarrow 0^+} \Xi(c) &= \lim_{c\rightarrow 0^+}\int_{\R} \frac{c \left(\frac{1}{\sqrt{\pi}}e^{-v^2}\right)}{c^2+v^2}\,dv\\
            \ &= \lim_{c\rightarrow 0^+}\frac{1}{\sqrt{\pi}}\int_{\R} \frac{ e^{-(cu)^2}}{1+u^2}\,du \tag{$v=cu$}\\
            \ &= \dfrac{1}{\sqrt{\pi}}\int_{\R} \frac{1}{1+u^2}\,du\\
            \ &= \left.\dfrac{\arctan{(u)}}{\sqrt{\pi}}\right|_{-\infty}^{\infty}\\
            \ &= \sqrt{\pi}
\end{align*}
\end{proof}

The equivalent computation shows $\ds \lim_{c\rightarrow 0^-}\Xi(c)=-\sqrt{\pi}$.  It is also clear that $\ds \lim_{c\rightarrow \pm \infty} \Xi(c)=0$.  We conclude that $\xi\in (-\sqrt{\pi},0)\cup(0,\sqrt{\pi})$.

We have reached a point in the calculations where, if we can represent $c$ as a function of $\xi$, we would be able to unwind the above calculations and find a representation of the transformed solution.  We now seek the inverse of $\Xi(c)$.  Graphically, the function $\Xi(c)$ appears to be a strictly decreasing function (on each connected piece of the domain).  We will show that $\Xi(c)$ is strictly decreasing, thus proving that $\Xi(c)$ is a one-to-one function.  Hence $\Xi(c)$ is invertible.
\begin{claim}
\label{Xi decreasing}
On each connected component of the domain of $\Xi(c)$ \eqref{xi of c}, $\Xi(c)$ is a strictly decreasing function.
\end{claim}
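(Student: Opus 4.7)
The plan is to compute $\Xi'(c)$ directly on each component of the domain and show it is strictly negative. The key preparatory step is a rescaling that isolates the $c$-dependence in a Gaussian factor, after which the sign of the derivative becomes transparent.

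For $c>0$, I would substitute $u=v/c$ in the defining integral \eqref{xi of c}. The factors of $c$ cancel tidily and one obtains
\begin{equation*}
\Xi(c) \;=\; \frac{1}{\sqrt{\pi}}\int_{\R}\frac{e^{-c^{2}u^{2}}}{1+u^{2}}\,du.
\end{equation*}
The integrand is dominated by the integrable function $1/(1+u^{2})$ uniformly in $c$, and on any compact subinterval $[a,b]\subset(0,\infty)$ the partial derivative $-2cu^{2}e^{-c^{2}u^{2}}/(1+u^{2})$ is dominated by $2b\,e^{-a^{2}u^{2}}u^{2}/(1+u^{2})$, which is also integrable. Hence differentiation under the integral sign is justified, yielding
\begin{equation*}
\Xi'(c) \;=\; -\frac{2c}{\sqrt{\pi}}\int_{\R}\frac{u^{2}e^{-c^{2}u^{2}}}{1+u^{2}}\,du.
\end{equation*}
For $c>0$ the integrand on the right is non-negative and not identically zero, so the integral is strictly positive and $\Xi'(c)<0$. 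This settles the component $(0,\infty)$.

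For the component $(-\infty,0)$ I would not redo the calculation; instead I would invoke the oddness of $\Xi$, which is immediate from \eqref{xi of c} via $\Xi(-c)=-\Xi(c)$. If $c_{1}<c_{2}<0$ then $-c_{1}>-c_{2}>0$, and monotonicity on the positive ray gives $\Xi(-c_{1})<\Xi(-c_{2})$; multiplying by $-1$ yields $\Xi(c_{1})>\Xi(c_{2})$, as required.

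There is no real obstacle here; the one point worth a line of justification is the application of the Leibniz rule, handled as above by a dominated-convergence estimate on compact subintervals. The only minor subtlety is remembering that monotonicity is a statement about each connected component separately, since $\Xi$ has a jump discontinuity across $c=0$ (with one-sided limits $\pm\sqrt{\pi}$ by the preceding claim), so there is no global monotonicity statement on $\R\setminus\{0\}$ being claimed or needed.
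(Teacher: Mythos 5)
Your proof is correct, but it takes a genuinely different route from the paper's. The paper never differentiates: it works with the difference $\Xi(c_1)-\Xi(c_2)$ for two points in the same component, factors out $(c_1-c_2)$ to reduce to the sign of $\int_{\R}\frac{(v^2-c_1c_2)\phi(v)}{(c_1^2+v^2)(c_2^2+v^2)}\,dv$, splits that integral at $v=\sqrt{c_1c_2}$ where the rational factor changes sign, and freezes $\phi$ at $\phi(\sqrt{c_1c_2})$ (an upper bound where the integrand is positive, a lower bound in absolute value where it is negative) so that the remaining rational integral can be evaluated in closed form with arctangents and shown negative. Your argument instead rescales $v=cu$ so that all the $c$-dependence sits in the Gaussian factor $e^{-c^2u^2}$, then differentiates under the integral sign with a standard domination estimate on compact subintervals, and transfers the result to the negative ray by the oddness $\Xi(-c)=-\Xi(c)$. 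This is shorter and arguably more transparent: the sign of $\Xi'$ is immediate (indeed, even without differentiating, the integrand $e^{-c^2u^2}/(1+u^2)$ is pointwise strictly decreasing in $c>0$ for $u\neq 0$, which already gives strict monotonicity), and your rescaled form of $\Xi$ is exactly the expression used in the paper's preceding claim to compute $\lim_{c\to 0^+}\Xi(c)=\sqrt{\pi}$, so the two claims are unified. What the paper's approach buys is that it needs no differentiability or interchange of limits, only a pointwise comparison of $\phi$ against its value at one point; the cost is the closed-form evaluation and the bookkeeping around $\sqrt{c_1c_2}$, which is only meaningful when both points lie in the same component. Both proofs correctly treat the two components separately, and your closing remark that no global monotonicity on $\R\setminus\{0\}$ is claimed is an accurate reading of the statement.
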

\begin{proof}
Without loss of generality, let $c_1>c_2$ and $c_i$ in $(-\infty,0)$.  Then
\begin{align*}
\Xi(c_1)-\Xi(c_2) &=  \int_{\R} \frac{c_1\phi(v)}{c_1^2+v^2}\,dv-\int_{\R} \dfrac{c_2\phi(v)}{c_2^2+v^2}\,dv\\
&= (c_1-c_2)\int_{\R} \dfrac{(-c_1c_2+v^2)\phi(v)}{(c_1^2+v^2)(c_2^2+v^2)}\,dv.
\end{align*}

Note that $\dfrac{(-c_1c_2+v^2)\phi(v)}{(c_1^2+v^2)(c_2^2+v^2)}$ is an even function in $v$.  It will be sufficient to understand the resultant integral on $[0,\infty)$.  Note that the integrand is negative on $(0,\sqrt{c_1c_2})$ and positive on $(\sqrt{c_1c_2},\infty)$.  Splitting the integral, we have
\begin{equation*}
\int_0^{\infty} \dfrac{(-c_1c_2+v^2)\phi(v)}{(c_1^2+v^2)(c_2^2+v^2)}\,dv = \int_0^{\sqrt{c_1c_2}} \dfrac{(-c_1c_2+v^2)\phi(v)}{(c_1^2+v^2)(c_2^2+v^2)}\,dv + \int_{\sqrt{c_1c_2}}^{\infty} \dfrac{(-c_1c_2+v^2)\phi(v)}{(c_1^2+v^2)(c_2^2+v^2)}\,dv.
\end{equation*}
Bounding the negative integral below and the positive integral above results in
\begin{align*}
\int_0^{\infty} \dfrac{(-c_1c_2+v^2)\phi(v)}{(c_1^2+v^2)(c_2^2+v^2)}\,dv &\leq \phi(\sqrt{c_1c_2})\left[\int_0^{\sqrt{c_1c_2}} \dfrac{(-c_1c_2+v^2)}{(c_1^2+v^2)(c_2^2+v^2)}\,dv + \int_{\sqrt{c_1c_2}}^{\infty} \dfrac{(-c_1c_2+v^2)}{(c_1^2+v^2)(c_2^2+v^2)}\,dv\right]\\
&= \phi(\sqrt{c_1c_2})\left[\dfrac{2(\arctan{\sqrt{c_2/c_1}}-\arctan{\sqrt{c_1/c_2}})-\pi}{c_1-c_2}\right].
\end{align*}
Then \begin{align*}
\Xi(c_1)-\Xi(c_2) &=  (c_1-c_2)\int_{\R} \dfrac{(-c_1c_2+v^2)\phi(v)}{(c_1^2+v^2)(c_2^2+v^2)}\,dv\\
&\leq 2 \phi(\sqrt{c_1c_2})\left[2(\arctan{\sqrt{c_2/c_1}}-\arctan{\sqrt{c_1/c_2}})-\pi\right]\\
&< 0
\end{align*} since $\arctan{\sqrt{c_2/c_1}}-\arctan{\sqrt{c_1/c_2}}<\pi/2$.  (Recall $c_i\neq 0$.)  Hence, we have shown that $\Xi(c)$ is a strictly decreasing function.
\end{proof}

\subsubsection{The Solution Class of Grossly Determined Solutions}

We are now ready to prove Theorem \ref{GDS thm}.

\begin{proof}
By Claim \ref{Xi decreasing}, $\Xi(c)$ is invertible.  Define $C(\xi):=\Xi^{-1}(\xi)$.  Then $\xi=\Xi(c)$ defines $c$ implicitly as $c=\Xi^{-1}(\xi)=C(\xi)$ for values $\xi\in (-\sqrt{\pi}, 0)\cup (0, \sqrt{\pi})$.  Unwinding the preceding computations, we can now show that a class of grossly determined solutions exists:
\begin{enumerate}
    \item The parameter $c=C(\xi)$ exists as an invertible function of $\xi$, $\xi\in I:=(-\sqrt{\pi}, 0)\cup (0, \sqrt{\pi})$.
    \item The function $k(\xi)$ can be represented as $k(\xi)=k_{\xi}$ where $k_{\xi}=\left(\frac{-1+\xi c}{\xi}\right)i$ (since $c=(1+a\xi)/\xi$ and $k=ai$).
    \item On $I$, $k(\xi)$ exists and:
    \begin{enumerate}
        \item $\widehat{K_v}$ exists by \eqref{Kvhat},
        \item $\hat{\rho}$ exists via solving the PDE \eqref{FTcontinuity2} (and the solution is $\ds \hat{\rho}(t,\xi)=\hat{\rho}_0(\xi)e^{-i\xi k(\xi)t}$).
    \end{enumerate}
    \item Off of $\xi\in (-\sqrt{\pi}, 0)\cup (0, \sqrt{\pi})$, equation \eqref{transformed rho} requires $\hat{\rho}(t,\xi)$ to be zero.  Hence, $\hat{\rho}(t,\xi)$ has support exclusively in $I$.
    \item We now have the representation of $\ds \widehat{K_v}\hat{\rho} = \left(\frac{1}{1-i\xi k(\xi)+i \xi v }\right)\hat{\rho}_0(\xi)e^{-i \xi k(\xi) t}$.
\end{enumerate}
Thus a class of grossly determined solutions, each solution dependent upon its own density field, is given by
\[ f(t,x,v)=\int_{\field{R}}K_v(y)\rho(t,x-y)\,dy.\]
\end{proof}

\section{Conclusions}

In the terms of Truesdell and Muncaster's conjectures on grossly determined solutions, we have established the existence of a class of grossly determined solution for a Boltzmann-like equation.  Specifically, given a gas' density at an initial time, we are able to state the convolution solution (for all time) for an inhomogeneous transport equation with modified linearized collisions operator.  In a companion paper, we will demonstrate that the class of general solutions to \eqref{thepide} does have the property that, in time, each member decays to a solution from the subclass of grossly determined solutions.

\bibliographystyle{plain}
\bibliography{cartybib}

\end{document}